\newtheorem{thm}{Theorem} 
\newtheorem{lem}{Lemma}
\newtheorem{prop}{Proposition}
\newtheorem{cor}{Corollary}
\newcommand{\pfbox}{\qed}
\newcommand{\Q}{\mathbb{Q}}
\newcommand{\Heading}[1]{\vspace{-0.25cm}\paragraph{\bf{#1}}}
\newcommand{\BeginProof}{\vspace{-0.25cm}\begin{proof}}
\newcommand{\Comment}[1]{}
\newcommand{\Appendix}[1]{}
\newcommand{\Avg}{\mathit{Avg}}
\newcommand{\VEC}[1]{\ensuremath{\overline{#1}}}
\newcommand{\Nat}{\ensuremath{\mathbb{N}}}
\newcommand{\OPT}{\mathit{OPT}}
\begin{document}

\pagestyle{plain}

\newcommand{\AppendixString}{\cite{mpexpressionCorr}}
\newcommand{\AppendixContent}[1]{}

\newcommand{\MPG}{\mathit{MPG}}

\title{The Complexity of\\Infinitely Repeated Alternating Move Games}

\author{Yaron Velner}
\institute{The Blavatnik School of Computer Science, Tel Aviv University, Israel}
\maketitle

\begin{abstract}
We consider infinite duration alternating move games.
These games were previously studied by Roth, Balcan, Kalai and Mansour~\cite{mansour}.
They presented an FPTAS for computing an approximate equilibrium, and conjectured that there is a polynomial algorithm for finding an exact equilibrium~\cite{mansourInternet}.
We extend their study in two directions:
(1)~We show that finding an exact equilibrium, even for two-player zero-sum games, is polynomial time equivalent to finding a winning strategy for a (two-player) mean-payoff game on graphs. The existence of a polynomial algorithm for the latter is a long standing open question in computer science.
Our hardness result for two-player games suggests that two-player alternating move games are harder to solve than two-player simultaneous move games,
while the work of Roth et al., suggests that for $k\geq 3$, $k$-player games are easier to analyze in the alternating move setting.
(2)~We show that optimal equilibria (with respect to the social welfare metric) can be obtained by pure strategies, and we present an FPTAS for computing a pure approximated equilibrium that is $\delta$-optimal with respect to the social welfare metric.
This result extends the previous work by presenting an FPTAS that finds a much more desirable approximated equilibrium.
We also show that if there is a polynomial algorithm for mean-payoff games on graphs, then there is a polynomial algorithm that computes an optimal exact equilibrium, and hence, (two-player) mean-payoff games on graphs are inter-reducible with $k$-player alternating move games, for any $k\geq 2$.
\end{abstract}

\section{Introduction}
In this work, we investigate infinitely repeated games in which players \emph{alternate} making moves.
This framework can model, for example, five telecommunication providers competing for customers: each company can observe the price that is set by the others, and it can update the price at any time.
In the short term, each company can benefit from undercutting its opponents price, but since the game is repeated indefinitely, in some settings, it might be better to coordinate prices with the other companies.
Such examples motivate us to study equilibria in alternating move games.

In this work, we study infinitely repeated $k$-player $n$-action games.
In such games, in every \emph{round}, a player chooses an action, and the utility of each player (for the current round) is determined according to the $k$-tuple of actions of the players.
Each player goal is to maximize his own \emph{long-run average} utility as the number of rounds tends to infinity.

These games were studied by Roth et al. in~\cite{mansour}, and they showed an FPTAS for computing an $\epsilon$-equilibrium.
Their result provided a theoretical separation between the alternating move model and the simultaneous move model, since for the latter, it is known that an
FPTAS for computing approximate equilibria does not exists for games with $k\geq 3$ players unless P=PPAD.
Their result was obtained by a simple reduction to mean-payoff games on graphs.
These games were presented in~\cite{mycielski}, and they play an important rule in automata theory and in economics.
The computational complexity of finding an exact equilibrium for such games is a long standing open problem, and despite many efforts~\cite{Gurvich198885,Zwick,Andrews:1999,MP,Bjdrklund2007210,Doyen}, there is no known polynomial solution for this problem.
 
We extend the work in~\cite{mansour} by investigating the complexity of an exact equilibrium (which was stated as an open question in~\cite{mansour}), and by investigating the computational complexity of finding an \emph{$\delta$-optimal} approximated equilibrium with respect to the \emph{social welfare} metric.
Our main technical results are as follows:
\begin{itemize}
\item We show a reduction from mean-payoff games on graphs to two-player zero-sum alternating move games, and thus we prove that $k$-player alternating move games are computationally equivalent to mean-payoff games on graphs for any $k\geq 2$.
\item We show that optimal equilibrium can be obtained by pure strategies, and
we show an FPTAS for computing an $\delta$-optimal $\epsilon$-equilibrium.
In addition, we show that computing an exact optimal equilibrium is polynomial time equivalent to solving mean-payoff games on graphs.
\end{itemize}
We note that the first result may suggest that two-player alternating move games are harder than two-player simultaneous move games, since a polynomial time algorithm to solve the latter is known~\cite{Littman04apolynomial}.
Hence, along with the result of~\cite{mansour}, we get that simultaneous move games are easier to solve with comparison to alternating move games for two-player games, and are harder to solve for $k\geq 3$ player games.

This paper is organized as following.
In the next section we bring formal definitions for alternating move games and mean-payoff games on graphs.
In Section~\ref{sect:LowerBound}, we show that alternating move games are at least as hard as mean-payoff games on graphs.
In Section~\ref{sect:Optimal} we investigate the properties of optimal equilibria, and we present an FPTAS for computing an $\delta$-optimal $\epsilon$-equilibrium. 
Due to lack of space, some of the proofs were omitted, and the full proofs are given in the appendix.
\section{Definitions}
In this section we bring the formal definitions for alternating move repeated games and mean-payoff games on graphs.
Alternating move games are presented in Subsection~\ref{subsec:1},
and mean-payoff games are presented in Subsection~\ref{subsec:2}.
\subsection{Alternating move repeated games}\label{subsec:1}
\Heading{Actions, plays and utility function.}
A $k$-player $n$-action game is defined by an action set $A_i$ for every player $i$, and by $k$ utility functions, one for each player, $u_i : A_1 \times \dots A_k \to [-1,1]$.
W.l.o.g we assume that the size of all action set is the same, and we denote it by $n$.
We note that any game can
be rescaled so its utilities are bounded in $[-1, 1]$, however, the FPTAS that was presented in~\cite{mansour}, and our results in Subsection~\ref{subsec:Approx} crucially rely on the assumption that the utilities are in the interval $[-1,1]$.

An alternating move game is played for infinitely many rounds.
In round $t$ player $j = 1 + (t\mod k)$ plays action $a_j^t$, and a \emph{vector of actions} $a^t = (a_1,\dots,a_k)$ is produced, where $a_i\in A_i$ is the last action of player $i$.
In every round $t$, player $i$ receives a utility $u_i(a^t)$, which depends only in the last action of each of the $k$ players (W.l.o.g the utility in the first $k$ rounds is zero for all players).
A sequence of infinite rounds forms a $\emph{play}$, and we characterize a play either by an infinite sequence of actions or by the corresponding sequence of vectors of actions.
The utility of player $i$ in a play $a^1 a^2 \dots a^t a^{t+1} \dots$ is the \emph{limit average payoff}, namely, $\lim_{n\to\infty} \frac{1}{n}\sum_{t=1}^n u_i(a^t)$.
When this limit does not exist, we define the utility of the play for player $i$ to be $\liminf_{n\to\infty} \frac{1}{n}\sum_{t=1}^n u_i(a^t)$.
We note that in the frame work of~\cite{mansour}, the utility of a play was undefined when the limit does not exist.
The results we present in this paper for the $\liminf$ metric holds also for the framework of~\cite{mansour}.
On the other hand, if we would take the $\limsup$ value instead, then the problem is much easier.
An optimal equilibrium is obtained when all players join forces and maximize player~$1 + (i \pmod k)$ utility for $2^i$ rounds (for $i=1,2,\dots,\infty$).
Hence, we can easily produce a polynomial algorithm to solve these games.
\Heading{Strategies.}
A \emph{strategy} is a recipe for player's next action, based on the entire \emph{history} of previous actions.
Formally, a (mixed) strategy for player $i$ is a function
$\sigma_i : (A_1\times \dots \times A_k)^*  \times A_1\times\dots \times A_{i-1} \to \Delta(A_i)$,
where $\Delta(S)$ denotes the set probability distribution over any finite set $S$.
We say that $\sigma_i$ is a \emph{pure strategy} if $\Delta$ is a degenerated distribution.
A \emph{strategy profile} is a vector $\sigma = (\sigma_1,\dots,\sigma_k)$ that defines a strategy for every player.
A profile of pure strategies uniquely determines the action vector in every round and yields a utility vector for the players.
A profile of mixed strategies determines, for every round $t$ in the play, a distribution of sequences of action vectors, and the \emph{average payoff} in round $t$ is the expected average payoff over the distribution of action vectors.
Formally, for a strategy profile $\sigma$ we denote the average payoff of player $i$ in round $t$ by
\[P_{i,t}(\sigma) = E_{a^1 a^2 \dots a^t \sim \sigma}[\frac{u_i(a^1) + \dots + u_i(a^t)}{t}]\]
and the utility of player $i$ is $\liminf_{t\to\infty} P_{i,t}$.

\Heading{Equilibria, $\epsilon$ equilibria and optimal equilibria}
A strategy profile forms an \emph{equilibrium} if none of the players can strictly improve his utility (that is induced by the profile) by unilaterally deviating from his strategy (that is defined by the profile).
For every $\epsilon > 0$, we say that a strategy profile forms an \emph{$\epsilon$-equilibrium} if none of the players can improve his utility by more than $\epsilon$ by unilaterally deviating from his strategy.

The \emph{social welfare} of a strategy profile is the sum of the utilities of all players.
An equilibrium (resp. an $\epsilon$-equilibrium) is called an \emph{optimal equilibrium} (optimal $\epsilon$-equilibrium) if its social welfare is not smaller than the social welfare of any other equilibrium ($\epsilon$-equilibrium).
For $\delta > 0$, an equilibrium (resp. an $\epsilon$-equilibrium) is called an \emph{$\delta$-optimal equilibrium} if its social welfare is not smaller by more than $\delta$ with comparison to the social welfare of any other equilibrium ($\epsilon$-equilibrium).

\subsection{Mean-payoff games on graphs}\label{subsec:2}
\Heading{Plays and payoffs.}
A \emph{mean-payoff game} on a graph is defined by a weighted directed bipartite graph
$G=(V = V_1\cup V_2, E, w: E \to \Q)$ and an initial vertex $v_0 \in V$.
The game consists of two players, namely, maximizer (who owns $V_1$) and minimizer (who owns $V_2$).
Initially, a pebble is place on the initial vertex, and in every round, the player who owns the vertex in which the pebble resides, advance the pebble into an adjacent vertex.
This process is repeated forever and forms a \emph{play}.
A play is characterized by a sequence of edges, and the
\emph{average payoff} of a play $\rho = e_1 \dots e_t$ up to round $t$ is denoted by
$P_t = \frac{1}{t} \sum_{i=1} w(e_i)$.
The \emph{value} of a play is the limit average payoff (mean-payoff), namely,
$\liminf_{t\to\infty} P_t$.
(We note that for games on graphs, the $\limsup$ metric gives the same complexity results.)
The objective of the maximizer is to maximize the mean-payoff of a play, and the minimizer aims to minimize the mean-payoff.

\Heading{Strategies, memoryless strategies, optimal strategies and winning strategies.}
In this work, we consider only pure strategies for games on graphs, and it is well-known that randomization does not give better strategies for mean-payoff games.
A \emph{strategy} for maximizer is a function $\sigma : (V_1 \times V_2)^*\times V_1 \to E$ that decides the next move, and similarly, for the minimizer a strategy is a function $\tau : (V_1 \times V_2)^* \to E$.
A strategy is called \emph{memoryless} if it depends only on the current position of the pebble.
Formally, a memoryless strategy for the maximizer is a function $\sigma : V_1 \to E$ and similarly a memoryless strategy for the minimizer is a function $\tau : V_2 \to E$.

A profile of strategies $(\sigma,\tau)$ uniquely determines the mean-payoff value of a game.
We say that a play $\pi = e_1 e_2 \dots e_n \dots$ is \emph{consistent} with a maximizer strategy $\sigma$ if there exists a minimizer strategy $\tau$ such that $\pi$ is formed by $(\sigma,\tau)$.
We say that the \emph{value} of a maximizer strategy is $p$ if it can assure a value of at least $p$ against any minimizer strategy.
Analogously, we say that the value of a minimizer strategy is $p$ if it can assure a value of at most $p$ against any maximizer strategy.

We say that a maximizer strategy is \emph{optimal} if its value is maximal (with respect to all possible maximizer strategies).
Analogously, a minimizer strategy is optimal if its value is minimal.
For a given threshold, we say that a maximizer strategy is a \emph{winning strategy} if it assures mean-payoff value that is greater or equal to the given threshold, and a minimizer strategy is winning if it assures value that is 
strictly smaller than the given threshold.

\Heading{One-player games, and games according to memoryless strategies}
A special (and easier) case of games on graphs is when the out-degree is one for all the vertices that are owned by a certain player.
In this case, all the \emph{choices} are done by one player.
For a two-player game on graph $G$, and a player-1 strategy $\sigma$, we define the one-player game graph $G^\sigma$ to be the game graph that is formed by removing, for every player-1 vertex $v$, the out-edges that are not equal to $\sigma(v)$.

\Heading{Classical results on mean-payoff games.}
Mean-payoff games were introduced in '79 by Ehrenfeucht and Mycielski~\cite{mycielski}, and their main result was that optimal strategies (for both players) exist, and moreover, the optimal value can be obtained by a memoryless strategy.
The decision problem for mean-payoff games is to determine if the maximizer has a winning strategy with respect to a given threshold.
The existence of optimal memoryless strategies almost immediately proves that the decision problem for mean-payoff games is in NP$\cap$coNP, and thus it is unlikely to be NP-hard (or coNP-hard). 
Zwick and Paterson~\cite{Zwick} introduced the first pseudo-polynomial algorithm, which runs in polynomial time when the weights of the edges are encoded in unary.
They also provided a polynomial algorithm for the special case of one-player mean-payoff games.
A randomized sub-exponential algorithm for mean-payoff games is also known~\cite{Bjdrklund2007210}, but despite many efforts, the existence of a polynomial algorithm to solve mean-payoff games remains an open question, and it is one of the rare problems in computer science that is known to be in NP$\cap$coNP but no polynomial algorithm is known.

We summarize the known results on mean-payoff games in the next theorems.
The first theorem states that optimal strategies exist and moreover, there exist optimal strategies that are memoryless.
\begin{thm}[\cite{mycielski}]\label{thm:OptStrategies}
For every mean-payoff game there exists a maximizer memoryless strategy $\sigma$ and a minimizer memoryless strategy $\tau$ such that $\sigma$ is optimal for the maximizer and $\tau$ is optimal for the minimizer.
\end{thm}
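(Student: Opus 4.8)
The plan is to prove, by induction on the number of edges of $G$, the stronger statement that the game is determined from every starting vertex --- the quantity $\sup_\sigma\inf_\tau$ of the mean-payoff, computed in $G$, equals $\inf_\tau\sup_\sigma$, a common value $\Val(v)$ --- and that, on top of this, there is a \emph{single} memoryless maximizer strategy attaining $\Val(v)$ from every $v$ at once and, symmetrically, a single memoryless minimizer strategy. (One could instead take the existence of $\Val(\cdot)$ for granted, since the $\liminf$-average payoff condition is Borel and hence the game is determined by Martin's theorem, and then the induction only has to manufacture the memoryless optimal strategies; but carrying the value along costs nothing extra.) Throughout I assume, without loss of generality, that every vertex has an outgoing edge, so plays never get stuck. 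In the base case every vertex has out-degree exactly one: from each $v$ there is then a unique play, it eventually settles into a cycle, and $\Val(v)$ is the mean weight of that cycle, realised by the (unique, hence memoryless) strategies of both players.

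For the inductive step, pick a vertex $v^{*}$ of out-degree at least two, say $v^{*}\in V_1$ (the case $v^{*}\in V_2$ is symmetric). Partition the out-edges of $v^{*}$ into two non-empty sets, and let $G_1$ and $G_2$ be the game graphs obtained from $G$ by keeping, at $v^{*}$, only the first resp.\ the second set of out-edges, everything else unchanged. Each of $G_1,G_2$ has strictly fewer edges and no new dead ends, so the induction hypothesis supplies value functions $\Val_1,\Val_2$ and memoryless optimal strategies $\sigma_1,\tau_1$ for $G_1$ and $\sigma_2,\tau_2$ for $G_2$. Since in $G$ the maximizer has every option available in $G_1$ and in $G_2$, and the minimizer's options are the same in all three graphs, $\sup_\sigma\inf_\tau$ computed in $G$ is at least $\max(\Val_1(v),\Val_2(v))$ for every $v$; rename so that $\Val_1(v^{*})\ge\Val_2(v^{*})$. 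It now suffices to exhibit a memoryless maximizer strategy in $G$ --- assembled from $\sigma_1,\sigma_2$ --- that guarantees at least $\max(\Val_1(v),\Val_2(v))$ from each $v$, and a memoryless minimizer strategy in $G$ --- assembled from $\tau_1,\tau_2$ --- that guarantees at most $\max(\Val_1(v),\Val_2(v))$ from each $v$; by weak duality this forces determinacy with $\Val=\max(\Val_1,\Val_2)$ and delivers both memoryless optimal strategies.

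The key tool for this last step is the analysis of one-player games: once a player commits to a memoryless strategy, the opponent faces a pure optimisation problem on the induced one-player graph, whose optimum is to reach and then loop a cycle of best mean weight, so e.g.\ $\inf_\tau\Val(\sigma,\tau)$ from $v$ equals the least mean weight of a cycle reachable from $v$ in $G^{\sigma}$, attained memorylessly (dually for a fixed minimizer strategy and the maximizer in $G^{\tau}$). Using this, one takes a memoryless minimizer strategy optimal in the subgame with the larger value at $v^{*}$, namely $\tau_1$ in $G_1$, and examines $G^{\tau_1}$, which differs from $G_1^{\tau_1}$ only in the extra out-edges of $v^{*}$ inherited from $G_2$: either these edges never raise the maximizer's best reachable mean cycle in $G^{\tau_1}$, in which case $\tau_1$ already caps the maximizer at $\Val_1(\cdot)$ throughout $G$; or some minimal improving cycle uses one of the new edges at $v^{*}$, hence lies entirely inside $G_2$, and playing it out shows it also helps the maximizer in $G_2$, which (after switching to $\tau_2$ and chasing reachability) collides with $\Val_2(v^{*})\le\Val_1(v^{*})$. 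Turning this dichotomy into an actual construction of the required patched memoryless strategies for all of $G$ --- correctly handling improving cycles that avoid $v^{*}$, choosing which of $\tau_1,\tau_2$ to follow at each vertex, and propagating the bound from $v^{*}$ to all starting vertices --- is the combinatorial core of the theorem and the step I expect to be the main obstacle; the cleanest way to discharge it is to recast the two subgames via the potential (energy) reformulation of the fixed-threshold problem, through which memoryless determinacy transfers transparently from $G_1,G_2$ to $G$.
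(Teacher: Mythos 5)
The paper does not actually prove this theorem: it is the classical Ehrenfeucht--Mycielski result, imported from \cite{mycielski} and used as a black box, so there is no in-paper argument to compare against. Your plan --- induction on the number of edges, splitting the out-edges of a maximizer vertex $v^{*}$ into $G_1$ and $G_2$ and showing $\Val_G=\max(\Val_1,\Val_2)$ together with memoryless witnesses --- is a legitimate and well-trodden route (essentially the Bj\"orklund--Sandberg--Vorobyov style ``simple proof'', which is quite different from Ehrenfeucht and Mycielski's original argument via the finite cyclic game). The easy half (every $v$ has $\sup_\sigma\inf_\tau \geq \max(\Val_1(v),\Val_2(v))$, since the minimizer's strategy space is unchanged) is fine, as is the reduction of a best response against a fixed memoryless strategy to a best-reachable-mean-cycle computation in the induced one-player graph.

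The gap is exactly where you flag it, and it is a genuine one rather than a routine verification. In the branch where the new $G_2$-edges at $v^{*}$ create an improving reachable cycle in $G^{\tau_1}$, you want to contradict $\Val_2(v^{*})\le\Val_1(v^{*})$; but the improving cycle and the path reaching it live in the one-player graph $G^{\tau_1}$, i.e.\ they are available to the maximizer only when the minimizer plays $\tau_1$. Nothing shows the maximizer can realize that cycle in $G_2$ against $\tau_2$, so no contradiction with $\Val_2(v^{*})$ follows. You also miss the case where the new edges improve the maximizer's lot not by creating a new cycle but by making a previously unreachable cycle of $G_1^{\tau_1}$ reachable from $v^{*}$; that improving cycle uses no new edge, so your dichotomy does not cover it. Finally, even granting the analysis at $v^{*}$, you still need one memoryless minimizer strategy capping the payoff at $\max(\Val_1(v),\Val_2(v))$ from \emph{every} $v$ simultaneously; patching $\tau_1$ and $\tau_2$ across regions of the graph is itself nontrivial, and deferring it to ``the energy/potential reformulation'' is pointing at a different proof rather than completing this one. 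As written, the inductive step is not closed.
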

The next theorem shows that there is a polynomial algorithm that computes optimal strategies if and only if there is a polynomial algorithm for the mean-payoff games decision problem.
\begin{thm}[\cite{Zwick}]\label{thm:DecAndCompTheSame}
The following problems are polynomial time inter-reducible:
(i)~Compute maximizer optimal memoryless strategy.
(ii)~Compute the optimal value that maximizer can assure.
(iii)~Determine whether maximizer optimal value is at least zero.
(iv)~Determine whether maximizer optimal value is greater than zero.
\end{thm}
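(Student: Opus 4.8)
The plan is to route all four problems through a single combinatorial fact and then close a cycle of reductions. The fact, an immediate consequence of Theorem~\ref{thm:OptStrategies}, is this: if $\sigma,\tau$ are memoryless optimal strategies, the play they induce from $v_0$ is eventually periodic and its mean-payoff equals the average weight of a \emph{simple} cycle; hence the optimal value $\nu$ the maximizer can assure from $v_0$ is a rational $p/q$ with $1\le q\le|V|$. After clearing denominators we may assume the weights are integers bounded by $W$ in absolute value, so $\nu\in[-W,W]$, $|p|\le|V|W$, and — crucially — any two distinct values of this form differ by at least $1/|V|^2$, while a nonzero such value has absolute value at least $1/|V|$. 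Note also that a decision oracle for threshold $0$ (problem~(iii)) immediately yields one for any rational threshold $t$: subtract $t$ from every weight.

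I would establish $(i)\Rightarrow(ii)\Rightarrow(iii)\Leftrightarrow(iv)$ together with $(iii)\Rightarrow(ii)\Rightarrow(i)$. For $(i)\Rightarrow(ii)$: given a memoryless optimal maximizer strategy $\sigma$, the graph $G^\sigma$ is a one-player (minimizer) mean-payoff game whose value from $v_0$ is exactly $\nu$, and one-player games are solvable in polynomial time (minimum-mean-cycle; the one-player algorithm of~\cite{Zwick}). $(ii)\Rightarrow(iii)$ is trivial: compute $\nu$ and compare it to $0$. For $(iii)\Leftrightarrow(iv)$, use the gap fact: ``$\nu>0$'' is equivalent to ``$\nu\ge 1/|V|^2$'', which after subtracting $1/|V|^2$ from every weight is an instance of~(iii); symmetrically ``$\nu\ge 0$'' is equivalent to ``$\nu>-1/|V|^2$'', an instance of~(iv) after adding $1/|V|^2$ to every weight.

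For $(iii)\Rightarrow(ii)$ I would binary-search for $\nu$ in $[-W,W]$: each test ``is $\nu\ge m$?'' for a dyadic midpoint $m$ is the threshold-$0$ decision problem on the game whose weights are shifted by $-m$, and these shifts keep the bit-size polynomial. After $O(\log(|V|^2W))$ halvings the search interval has width below $1/|V|^2$, so it contains exactly one rational of denominator at most $|V|$, namely $\nu$, which is then extracted in polynomial time (via the Stern--Brocot tree, or by testing for each $q\le|V|$ whether the interval contains a multiple of $1/q$). The central step is $(ii)\Rightarrow(i)$: having computed $\nu$, I fix the maximizer's choice at one vertex at a time, maintaining the invariant ``the current game has optimal value $\nu$ from $v_0$ and a memoryless optimal strategy'', which holds initially by Theorem~\ref{thm:OptStrategies}. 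Processing a not-yet-fixed player-$1$ vertex $v$, I try each of its current out-edges $e$, delete the other out-edges of $v$, and ask (decision oracle, threshold $\nu$) whether the maximizer still assures $\ge\nu$ from $v_0$; some $e$ answers ``yes'', because the edge chosen at $v$ by the memoryless optimal strategy of the invariant is still present and still witnesses value $\nu$. I keep such an $e$ as $\sigma(v)$; the reduced game has optimal value $\ge\nu$ by the oracle and $\le\nu$ since restricting the maximizer cannot raise his value, hence $=\nu$, and Theorem~\ref{thm:OptStrategies} restores the rest of the invariant. When every player-$1$ vertex has out-degree one, the forced memoryless strategy $\sigma$ has value $\nu$ from $v_0$ in the final (one-player) game and therefore in $G$, so it is optimal.

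I expect the main obstacles to be twofold. First, the gap fact must be used in its strong form: it does not merely bound denominators but rules out optimal values strictly between $-1/|V|^2$ (resp.\ $0$) and $0$ (resp.\ $1/|V|^2$), and this is precisely what makes $(iii)\Leftrightarrow(iv)$ and the termination and extraction steps of the binary search correct. Second, in $(ii)\Rightarrow(i)$ one must justify re-invoking Theorem~\ref{thm:OptStrategies} on each intermediate, already-partly-reduced game, and argue that ``the maximizer still assures $\ge\nu$ from $v_0$'' is exactly the condition that keeps a memoryless optimal strategy alive after an edge is deleted; the remaining bookkeeping (polynomial bit-size of all shifted weights and thresholds) is routine.
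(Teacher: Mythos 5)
The paper does not actually prove this statement: it is imported verbatim from Zwick--Paterson \cite{Zwick} as a known classical result, so there is no in-paper argument to compare against. Your reconstruction is correct and is essentially the standard argument from that literature: the separation facts (the value is $p/q$ with $q\le|V|$, distinct such values differ by at least $1/|V|^2$, nonzero ones have absolute value at least $1/|V|$) correctly justify both the equivalence of the $\geq 0$ and $>0$ decision problems via a weight shift of $\pm 1/|V|^2$ and the termination/extraction step of the binary search, and the edge-by-edge self-reduction for $(ii)\Rightarrow(i)$ is sound because deleting the non-chosen out-edges can only lower the maximizer's value while the surviving optimal edge certifies it stays at $\nu$, with Theorem~\ref{thm:OptStrategies} re-applicable to each restricted game. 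The only cosmetic caveat is that your $(ii)\Rightarrow(i)$ produces a strategy optimal from the fixed initial vertex $v_0$; if ``optimal memoryless strategy'' is read as optimal from every vertex, one either repeats the construction per vertex or notes that the same argument goes through with the value vector, which is routine.
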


\section{Two-Player Zero-Sum (Alternating Move) Games are Inter-Reducible with Mean-Payoff Games}\label{sect:LowerBound}
In this section we prove that there is a polynomial algorithm that computes an exact equilibrium for two-player zero-sum (alternating move) games if and only if there exists a polynomial algorithm that solves mean-payoff games.

The reduction from two-player zero-sum games to mean-payoff games is trivial, 
for a two-player zero-sum game with actions $A_1$, $A_2$ and utility functions $u_1 : A_1 \times A_2 \to \Q$ and $u_2 = -u_1$, we construct a complete bipartite game graph $G = (V=A_1 \cup A_2, E = (A_1 \times A_2) \cup (A_2 \times A_1),w:E\to \Q)$ such that the weight of the transition from $a_1\in A_1$ to $a_2\in A_2$ is simply $u_1(a_1,a_2)$, and the weight of the transition from $a_2$ to $a_1$ is also $u_1(a_1,a_2)$.
It is a simple observation that a pair of optimal strategies (for the maximizer and minimizer) in the mean-payoff game induces an equilibrium strategy profile in the two-player zero-sum game and vice versa.

The reduction for the converse direction is more complicated.
For this purpose we bring the notion of \emph{undirected game graph}.
A mean-payoff game graph is said to be \emph{undirected} if its edge relation is symmetric, and $w(v_1,v_2) = w(v_2,v_1)$ for every edge $(v_1,v_2)$.
(Basically, it is a game on an undirected graph.)
The next simple lemma shows a reduction from mean-payoff games on a complete bipartite undirected graphs to two-player zero-sum game.
\begin{lem}\label{lem:ForGameToUnderictedComplete}
There is a polynomial reduction from mean-payoff games on complete bipartite undirected graphs to two-player zero-sum games.
\end{lem}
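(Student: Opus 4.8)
The plan is to realize the complete bipartite undirected mean-payoff game $G=(V_1\cup V_2,E,w)$ essentially \emph{as} a two-player zero-sum alternating move game, inverting the trivial reduction (from zero-sum games to mean-payoff games) recalled above, whose output is always a complete bipartite undirected graph. Concretely, I would put $A_1:=V_2$ and $A_2:=V_1$ (the assignment chosen so that player~$1$ ends up making the maximizer's moves), let $W:=\max_{e\in E}|w(e)|$ (the case $W=0$ being trivial), and set $u_1(v_2,v_1):=w(v_1,v_2)/W$ and $u_2:=-u_1$; then $u_1$ indeed ranges in $[-1,1]$, and the initial vertex $v_0$ of $G$ is not used by the construction.

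The core of the argument is a round-by-round correspondence between plays. After the first $k=2$ rounds (whose utility is zero), the ``state'' of the alternating game is the pair $(v_2,v_1)$ of the players' last actions, and each later round overwrites exactly one coordinate of this pair: alternately the $V_2$-coordinate (when player~$1$ moves) and the $V_1$-coordinate (when player~$2$ moves). Listing the states visited and reading off the overwritten coordinates in order produces a walk $x^{0},x^{1},x^{2},\dots$ in $G$ that alternates between $V_2$ and $V_1$, and it is a legal walk precisely because $G$ is complete bipartite. The key observation is that, because $w$ is symmetric, the utility $u_1$ paid in any round equals $w(e)/W$ for the corresponding walk edge $e$ --- uniformly, whether that round overwrote the $V_1$- or the $V_2$-coordinate, and it is exactly here that $w(v_1,v_2)=w(v_2,v_1)$ is used. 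Hence the per-round utility sequence of player~$1$ is $1/W$ times the edge-weight sequence along the walk, so (the two leading zeros being irrelevant to a $\liminf$ of running averages) player~$1$'s utility equals $1/W$ times the mean-payoff of the walk; moreover player~$2$'s moves are exactly the moves of the walk out of $V_2$-vertices (the minimizer's moves in $G$) and player~$1$'s moves those out of $V_1$-vertices (the maximizer's moves), while the first walk vertex $x^0\in V_2$ is chosen by player~$1$. Thus, as a zero-sum game, the alternating game is the mean-payoff game $G$ with all weights scaled by $1/W$, with the sole extra freedom that the maximizer also selects the starting vertex inside $V_2$.

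To conclude I need this extra freedom to be harmless, i.e.\ that $\mathrm{val}_G(v)$ is the same for every vertex $v$ of a complete bipartite undirected mean-payoff game; call this common number $\mathrm{val}(G)$. By Theorem~\ref{thm:OptStrategies} the game is determined, and since the mean-payoff objective is prefix-independent, prepending one edge to a play leaves its value unchanged; combined with completeness this yields, for a maximizer vertex $u\in V_1$, the identity $\mathrm{val}_G(u)=\max_{v\in V_2}\mathrm{val}_G(v)$ (maximum over \emph{all} of $V_2$), and dually $\mathrm{val}_G(v)=\min_{u\in V_1}\mathrm{val}_G(u)$ for $v\in V_2$; these two identities force all vertex values to coincide. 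Granting this, every Nash equilibrium of the zero-sum alternating game gives player~$1$ the payoff $\mathrm{val}(G)/W$ --- such equilibria exist and their payoff is the game value because, via the reduction to mean-payoff games given above together with Theorem~\ref{thm:OptStrategies} and the fact that randomization does not help in mean-payoff games, the alternating game is a determined zero-sum game whose value is attained by pure strategies. So from an equilibrium we recover $\mathrm{val}(G)=\mathrm{val}_G(v_0)$ in polynomial time, which is the value of the given mean-payoff game.

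The main obstacle is the starting-vertex independence lemma: without it the reduction would only compute $\max_{v\in V_2}\mathrm{val}_G(v)$ rather than $\mathrm{val}_G(v_0)$, and this independence genuinely requires completeness of the bipartite graph --- it fails for general undirected bipartite mean-payoff games. The remaining work is bookkeeping: tracking round parities and which coordinate is overwritten so that the utilities line up edge-for-edge with the walk, where once more the symmetry of $w$ is exactly what makes both kinds of round contribute the correct weight.
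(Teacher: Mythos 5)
Your construction is exactly the paper's ($A_1=V_2$, $A_2=V_1$, $u_1=w/W$, $u_2=-u_1$); the paper simply declares the correspondence between equilibria and optimal mean-payoff strategies ``trivial to observe'' and stops there. Your write-up is correct and supplies the verification the paper omits, most usefully the point that completeness plus prefix-independence makes the value of $G$ independent of the starting vertex, which is precisely what neutralizes player~1's extra freedom to choose the initial $V_2$-vertex in the alternating game.
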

\begin{proof}
The proof is straight forward.
Let $V_1$ and $V_2$ be the maximizer and minimizer (resp.) vertices in the mean-payoff game.
We construct a two-player zero-sum game in the following way.
The set of action of player 1 is $A_1 = V_2$ and the set of action for player 2 is $A_2 = V_1$.
We denote by $W$ the least value for which all the weights in the undirected graphs are in $[-W,+W]$,
and the utility function of player 1 is $u_1(a_1,a_2) = \frac{w(a_1,a_2)}{W} = \frac{w(a_2,a_1)}{W}$, and $u_2 = -u_1$.
It is trivial to observe that an equilibrium profile induces a pair of optimal strategies for the mean-payoff game, and the proof follows. \pfbox
\end{proof}

Due to Lemma~\ref{lem:ForGameToUnderictedComplete}, all that is left is to prove that mean-payoff games on complete bipartite undirected graphs are equivalent to mean-payoff games.
A recent result by Chatterjee, Henzinger, Krinninger and Nanongkai~\cite{CH12} gives us the first step towards such proof.
\begin{thm}[Corollary~24 in \cite{CH12}]\label{thm:CH}
Solving mean-payoff games on complete bipartite (directed) graphs is as hard as solving mean-payoff games on arbitrary graphs.
\end{thm}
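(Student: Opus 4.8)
The plan is to give a polynomial-time reduction that turns an instance of the mean-payoff game decision problem on an arbitrary (bipartite) graph $G$ with distinguished vertex $v_0$ — by Theorem~\ref{thm:DecAndCompTheSame} we may assume the threshold is $0$, after shifting all weights — into an instance on a \emph{complete} bipartite graph $G'$, so that the maximizer wins in $G'$ exactly when he wins in $G$. The converse direction, that a complete bipartite game is a special case of an arbitrary one, is trivial, and Theorem~\ref{thm:DecAndCompTheSame} then propagates the equivalence to the value-computation and optimal-strategy formulations as well.

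The first idea is ``edge completion'': add every missing edge out of a maximizer vertex with a large negative weight $-M$ and every missing edge out of a minimizer vertex with a large positive weight $+M$, where $M$ is polynomially large, e.g.\ $M > (|V|+1)W$ with $W$ bounding the absolute value of every original weight. On any simple cycle of $G'$ a freshly added edge is averaged against at most $|V|-1$ edges of weight $\ge -W$, so every cycle through a new $+M$ edge has strictly positive mean and every cycle through a new $-M$ edge has mean below $-W$. One would then like to argue, via the memoryless determinacy of Theorem~\ref{thm:OptStrategies} together with the classical characterisation of the value of a one-player mean-payoff game as a minimum (resp.\ maximum) cycle mean, that fixing an optimal memoryless strategy of one player makes every new edge strictly unprofitable for the other, whence the value is unchanged.

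The main obstacle is that this last step fails for the naive construction: paying the one-time $\pm M$ penalty is negligible in the limit average, while in a complete bipartite graph a player may on its own turn move to an arbitrary vertex of the opposite side, i.e.\ \emph{teleport}; so a player can absorb the penalty once and thereafter keep the play inside whichever part of $G$ is best for it, and the value of $G'$ becomes something governed by the players' best reachable regions rather than the value of $G$ at $v_0$. The repair is to precede the completion by a polynomial-time normalization that removes this degree of freedom, reducing to instances whose value is the same from every vertex. For example, one first reduces the decision problem on $G$ to the problem ``is the maximizer's winning region all of $V$?'' — these are polynomially inter-reducible (one direction asks the decision problem once per vertex; the other prepends a gadget with which the maximizer can, from any vertex, restart the play at $v_0$ at a one-time cost of $-M$, so that $v_0$ is winning iff every vertex is) — and on such instances the edge completion can be shown to preserve the answer, because then every old cycle that becomes reachable in $G'$ already had nonnegative mean while every new-edge cycle has positive mean, so a uniformly optimal memoryless strategy for the appropriate player still secures threshold $0$. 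Turning this outline into a rigorous value-preservation argument — in particular controlling the interaction of the teleporting edges with the reachable cycle structure under an optimal memoryless strategy — is the technical heart of the proof; the final equivalence with the arbitrary-graph problem, in both directions, then follows from Theorem~\ref{thm:DecAndCompTheSame}.
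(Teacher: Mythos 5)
First, a point of comparison: the paper does not prove this statement at all --- Theorem~\ref{thm:CH} is quoted verbatim as Corollary~24 of \cite{CH12} and used as a black box (the paper's own contribution begins only afterwards, with the extension to \emph{undirected} complete bipartite graphs in Lemma~\ref{lem:DirectAndUndirect}). So there is no in-paper proof to measure you against; what follows assesses your proposal on its own terms.

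You correctly isolate the real difficulty --- in a complete bipartite graph a one-time $\pm M$ penalty is negligible in the limit average, so a player can ``teleport'' once into whichever region of the original graph favours him --- but the repair does not close the gap. Your normalization (let the maximizer restart at $v_0$ from every vertex at a one-time cost) only enforces the implication ``$v_0$ winning $\Rightarrow$ every vertex winning''. It does nothing in the NO case: if $v_0$ is losing for the maximizer but his winning region $W_{\max}$ is nonempty --- take $G$ to consist of two mutually unreachable parts, one containing $v_0$ where the minimizer wins everywhere and one where the maximizer wins everywhere --- then the normalized instance still has $\emptyset \neq W_{\max} \neq V$, and after edge completion the maximizer teleports once into $W_{\max}$ (which is closed under his uniform memoryless winning strategy and under the minimizer's old moves) and thereafter wins from every vertex, including $v_0$. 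The reduction thus answers YES on a NO instance. Soundness of the completion requires that in \emph{both} outcomes the winning player's region be all of $V$, which forces a symmetric restart gadget for the minimizer as well (with the one-time penalty signed against the jumper), plus a further argument that a cycle of the completed graph cannot mix a fresh $-M$ maximizer edge with a fresh $+M$ minimizer edge so that the penalties cancel.

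Relatedly, the justification ``every old cycle that becomes reachable in $G'$ already had nonnegative mean'' is false as stated: even when the maximizer wins from every vertex, $G$ may contain negative cycles that he merely avoids. The correct statement concerns the cycles of the one-player graph $G^{\sigma}$ induced by his uniform memoryless winning strategy $\sigma$, and it covers only the YES case. Together with the explicit deferral of ``the technical heart,'' this leaves the proposal an outline with a genuine gap rather than a proof --- which is precisely why the result is nontrivial and is imported here from \cite{CH12}.
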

We use the above result as a black box and extend it to complete bipartite undirected graphs.
We note that the main difference between directed and undirected graphs is that for undirected graphs the weight function is symmetric.
In the rest of this section we will describe a process that for a given complete bipartite directed graph, generates a suitable symmetric weight function, and the winner in the generated graph is the same as in the original graph.

We say that a directed game graph has a \emph{normalized weight function} if it assigns a positive weight to every out-edge of maximizer vertex, and a negative weight for every out-edge of minimizer vertex.
The next lemma shows that we may assume w.l.o.g that a directed game graph has a normalized weight function.
\begin{lem}\label{lem:NormWeightFunction}
Solving mean-payoff games on (directed) bipartite graphs is polynomial time inter-reducible to solving mean-payoff games on (directed) bipartite graphs with normalized weights.
\end{lem}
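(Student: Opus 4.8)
The plan is to note that one direction of the inter-reduction is trivial — a game with a normalized weight function is in particular a bipartite game, so the identity map serves — and to obtain the other direction by a weight-shifting construction that exploits the strict alternation between $V_1$ and $V_2$ along every play of a bipartite game.

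Concretely, given a bipartite game graph $G=(V=V_1\cup V_2,E,w)$ with rational weights, let $W$ be the maximum absolute value of a weight and fix a constant $c$ with $c>W$ and polynomial bit-size (e.g.\ $c=\lfloor W\rfloor+1$). Define a new weight function $w'$ on the \emph{same} graph by $w'(e)=w(e)+c$ whenever $e$ is an out-edge of a vertex in $V_1$, and $w'(e)=w(e)-c$ whenever $e$ is an out-edge of a vertex in $V_2$. Since $|w(e)|\le W<c$, every out-edge of a $V_1$-vertex now has weight $\ge c-W>0$ and every out-edge of a $V_2$-vertex has weight $\le W-c<0$; hence $w'$ is normalized, and the construction is clearly polynomial. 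The graph structure, and in particular its memoryless strategies, are untouched.

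It then remains to check that $G$ with $w$ and $G$ with $w'$ have the same value from every starting vertex. Fix any play $\rho=e_1e_2\cdots$. Because $G$ is bipartite, its edges alternate strictly between out-edges of $V_1$ and out-edges of $V_2$, so among the first $t$ edges of $\rho$ the number of $V_1$-out-edges and the number of $V_2$-out-edges differ by at most $1$. Therefore $\bigl|\sum_{i=1}^t w'(e_i)-\sum_{i=1}^t w(e_i)\bigr|\le c$ for every $t$, and the two average-payoff sequences satisfy $|P'_t-P_t|\le c/t\to 0$, whence $\liminf_{t\to\infty}P'_t=\liminf_{t\to\infty}P_t$. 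Thus the mean-payoff value is preserved play-for-play, so the winner for any threshold is preserved and memoryless optimal strategies transfer verbatim; by Theorem~\ref{thm:DecAndCompTheSame} (and the fact that a decision problem with threshold $\nu$ reduces to the threshold-$0$ version by first subtracting $\nu$ from all weights, which keeps the graph bipartite) this gives the claimed polynomial-time inter-reduction in both directions.

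The argument involves no real obstacle: the only points needing a line of care are that $c$ has polynomially bounded encoding (immediate) and the elementary counting claim that in any prefix of a play the number of maximizer moves and minimizer moves differ by at most one — this is precisely where bipartiteness is used and is the conceptual heart of why the shift does not perturb the long-run average.
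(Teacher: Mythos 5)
Your proposal is correct and follows essentially the same route as the paper: shift all out-edges of maximizer vertices up by a constant exceeding $W$ and all out-edges of minimizer vertices down by the same constant, then use bipartiteness (strict alternation of the two edge types along any play) to see that prefix sums change by at most that constant, so the limit average is unchanged play-for-play. Your explicit counting argument for why the shift cancels is exactly the ``straightforward observation'' the paper invokes, so there is nothing to add.
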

\begin{proof}
Let $G$ be a non-normalized graph and let us denote by $W$ the heaviest weight (in absolute value) that is assigned by its weight function $w$.
We construct a normalized graph $G'$ from $G$ by defining a weight function $w'$ as:
\[
w'(u,v) = \left\{ \begin{array}{rl}
w(u,v) + (W + 1) &\mbox{ if $u$ is owned by maximizer} \\
w(u,v) - (W + 1) &\mbox{ otherwise (if $u$ is owned by the minimizer)}
\end{array} \right.
\]
Clearly, $G'$ is a normalized graph, and since $G'$ and $G$ are bipartite, it is straight forward to observe that for any finite path in $\pi$ we have that $|w(\pi) - w'(\pi)| \leq W+1$, and thus, for every infinite path $\rho$, we have that that mean-payoff value of $\rho$ according to $w$ is identical to the mean-payoff of $\rho$ according to $w'$.
Therefore, a maximizer winning (resp. optimal) strategy in graph $G$ is a winning (optimal) strategy also in $G'$ and vice versa, and the proof of the lemma follows. \pfbox
\end{proof}

In the next lemma we show that mean-payoff games on direct normalized bipartite complete graphs are as hard as mean-payoff games on undirected normalized bipartite complete graphs.
\begin{lem}\label{lem:DirectAndUndirect}
The problem of determining whether maximizer has a winning strategy for a threshold $0$ for
a mean-payoff games on a directed normalized bipartite complete graph is as hard as the corresponding problem for mean-payoff games on an undirected normalized bipartite complete graph
\end{lem}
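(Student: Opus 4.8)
The plan is to reduce everything to \emph{cycles} by invoking the determinacy of mean-payoff games together with the existence of optimal memoryless strategies (Theorem~\ref{thm:OptStrategies}). Recall that in any mean-payoff game the maximizer wins against threshold $0$ iff there is a memoryless maximizer strategy $\sigma$ such that every simple cycle of the one-player graph $G^\sigma$ has nonnegative mean weight; dually, the minimizer wins iff there is a memoryless $\tau$ such that every simple cycle of $G^\tau$ has strictly negative mean weight. Hence it suffices to produce, in polynomial time from a directed normalized complete bipartite graph $(G,w)$, a complete bipartite graph $(G',w')$ with $w'$ symmetric (and, if desired, normalized via the bipartite shift of Lemma~\ref{lem:NormWeightFunction}), such that the maximizer wins $(G,w)$ against $0$ iff he wins $(G',w')$ against $0$; combined with Lemma~\ref{lem:ForGameToUnderictedComplete}, Theorem~\ref{thm:CH} and Lemma~\ref{lem:NormWeightFunction}, this gives the claimed hardness.

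The delicate point -- and the main obstacle -- is that a single undirected weight $w'(\{v_1,v_2\})$ has to play the role of \emph{both} the maximizer edge $v_1\to v_2$ (positive, by normalization) and the minimizer edge $v_2\to v_1$ (negative), and these cannot literally be given the same weight. The construction must therefore \emph{decouple} the ``maximizer half'' and the ``minimizer half'' of each directed pair. The natural way to do this while keeping the graph complete bipartite is to add, for every original vertex, auxiliary vertices on both sides of the bipartition, route the two halves of each original round through distinct auxiliary vertices, and make the auxiliary region inert by attaching to it cycles of the ``wrong'' sign for the opponent that the relevant player can be forced into, so that neither player ever profits by wandering off the simulation. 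Using the fact that adding a constant to every maximizer edge and subtracting it from every minimizer edge leaves all cycle weights of a bipartite graph unchanged (the idea behind Lemma~\ref{lem:NormWeightFunction}), one can in addition rescale so that the net weight contributed by the auxiliary gadgets along any admissible traversal is sign-controlled; this is what makes $w'$ symmetric while each simulated cycle of $G'$ inherits, against threshold $0$, the sign of the corresponding cycle of $G$.

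With $(G',w')$ in hand I would prove the two implications separately. For ``$\Rightarrow$'', take a winning memoryless maximizer strategy $\sigma$ in $(G,w)$, lift it to the simulating memoryless strategy $\sigma'$ in $(G',w')$ (play the corresponding move on simulation vertices, and on auxiliary vertices play the move that keeps the opponent trapped), then take an arbitrary simple cycle of $G'^{\sigma'}$, decompose it into a simulated cycle of $G^{\sigma}$ -- which has mean $\ge 0$ since $\sigma$ is winning -- plus auxiliary segments whose contribution is nonnegative by construction, and conclude the cycle has mean $\ge 0$, so $\sigma'$ is winning. For ``$\Leftarrow$'' I would use determinacy: if the maximizer does not win $(G,w)$, then by Theorem~\ref{thm:OptStrategies} the minimizer has a winning memoryless $\tau$, and the argument symmetric under swapping the two players and the sign conventions produces a winning memoryless minimizer strategy $\tau'$ in $(G',w')$. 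Since $G'$ and $w'$ are computed in polynomial time, the lemma follows. The genuinely hard part is the middle step -- designing the decoupling gadget so that completeness (and normalization) is preserved and so that ``reversing'' an edge that in $G$ belonged to the opponent is provably never profitable; once the gadget is right, the cycle decompositions and the appeal to memoryless determinacy are routine.
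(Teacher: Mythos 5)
Your overall frame (memoryless determinacy, transfer of winning memoryless strategies between $G$ and a symmetrized $G'$, exploiting completeness of the bipartite graph) matches the paper's, but your proof has a genuine gap at its center: the construction of $G'$ is never actually given. You posit a ``decoupling gadget'' with auxiliary vertices on both sides of the bipartition, an ``inert'' auxiliary region, and sign-controlled auxiliary contributions, and you yourself flag designing it as ``the genuinely hard part'' --- but you do not define the gadget, its weights, or why neither player profits from entering it, so neither implication can be checked. In a complete bipartite graph every new auxiliary vertex is adjacent to \emph{all} vertices of the other side, so ``making the auxiliary region inert'' is precisely the kind of claim that needs an explicit weight assignment and a cycle-by-cycle verification; as written there is nothing to verify.

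The deeper issue is that your starting premise --- that the two directions of a pair $\{v_1,v_2\}$ ``cannot literally be given the same weight'' and must therefore be decoupled --- points you away from the much simpler construction the paper uses. Keep $V'=V$ and set $w'(\{v_1,v_2\})=w(v_1,v_2)$ if $w(v_1,v_2)+w(v_2,v_1)\geq 0$, and $w'(\{v_1,v_2\})=w(v_2,v_1)$ otherwise. Call a maximizer edge $(v_1,v_2)$ \emph{surely losing} if $w(v_1,v_2)+w(v_2,v_1)<0$: since the graph is complete bipartite, the minimizer can answer such an edge by returning to $v_1$ forever, forcing mean payoff $\tfrac{1}{2}\left(w(v_1,v_2)+w(v_2,v_1)\right)<0$, so no winning memoryless maximizer strategy ever uses such an edge; dually for the minimizer when the sum is $\geq 0$. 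Consequently, along any play consistent with a winning memoryless maximizer strategy, $w'$ agrees with $w$ on the maximizer's own moves, and normalization ($w(u,v)<0<w(v,u)$ for minimizer edges $(u,v)$) guarantees that $w'$ can only increase the weight of the minimizer's moves; hence the strategy still wins in $G'$, and symmetrically for the minimizer. Determinacy (Theorem~\ref{thm:OptStrategies}) then closes the equivalence with no auxiliary vertices at all. This is where normalization earns its keep in the paper; in your sketch it plays no identifiable role.
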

To conclude, by Lemma~\ref{lem:ForGameToUnderictedComplete} we get that a polynomial algorithm for alternating move two-player zero-sum games exists if and only if there exists a polynomial algorithm for solving mean-payoff games on a complete bipartite undirected graph,
and by Lemmas~\ref{lem:NormWeightFunction},~\ref{lem:ForGameToUnderictedComplete} and~\ref{lem:DirectAndUndirect}
and by Theorem~\ref{thm:CH} we get that the latter exists if and only if there exists a polynomial algorithm for solving mean-payoff games on arbitrary (directed) graphs.
Hence, the main result of this section follows.
\begin{thm}\label{thm:Sec1}
There exists a polynomial time algorithm for computing exact equilibrium for two-player zero-sum (alternating move) games if and only if there exists a polynomial time algorithm for solving mean-payoff games on graphs.
\end{thm}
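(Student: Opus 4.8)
The plan is to assemble the equivalence from the reductions already set up in this section, treating Lemma~\ref{lem:DirectAndUndirect} as the one genuinely new ingredient and everything else as bookkeeping over the classical facts in Theorems~\ref{thm:OptStrategies} and~\ref{thm:DecAndCompTheSame}.

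For the ``only if'' direction I would argue directly. Given a two-player zero-sum alternating move game with $u_2=-u_1$, form the complete bipartite mean-payoff graph $G$ on $A_1\cup A_2$ with $w(a_1,a_2)=w(a_2,a_1)=u_1(a_1,a_2)$ exactly as at the start of this section, the maximizer owning the vertices that correspond to player~$1$'s turns. Since $G$ is bipartite the pebble alternates sides, so a play of $G$ is a play of the game, and the running average of the edge weights equals the running average of the $u_1$-values up to a term that vanishes in the limit; hence the $\liminf$-values coincide. By Theorem~\ref{thm:OptStrategies}, $G$ has optimal memoryless strategies for both players, and the induced profile is an exact equilibrium of the alternating move game: in a two-player zero-sum game a profile is an equilibrium iff each strategy secures the value, which is exactly optimality in $G$ (and randomization does not improve either player's guarantee, so allowing mixed deviations changes nothing). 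Since the decision problem for mean-payoff games is assumed polynomial, Theorem~\ref{thm:DecAndCompTheSame} lets us compute these optimal memoryless strategies, and hence the equilibrium, in polynomial time.

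For the ``if'' direction, it suffices by Theorem~\ref{thm:DecAndCompTheSame} to decide in polynomial time whether the maximizer's optimal value is at least $0$ on an arbitrary mean-payoff graph (an arbitrary threshold reduces to threshold $0$ by shifting all weights). I would compose the following polynomial reductions: Theorem~\ref{thm:CH} reduces this question to the same question on complete bipartite directed graphs; Lemma~\ref{lem:NormWeightFunction} lets us further assume the weight function is normalized (its weight-shift does not touch the graph, so completeness and bipartiteness are preserved); Lemma~\ref{lem:DirectAndUndirect} reduces the threshold-$0$ decision on directed normalized complete bipartite graphs to the threshold-$0$ decision on undirected normalized complete bipartite graphs; and Lemma~\ref{lem:ForGameToUnderictedComplete} turns such an undirected mean-payoff game into a two-player zero-sum alternating move game whose exact equilibrium, as observed in that lemma's proof, induces a pair of optimal strategies of the mean-payoff game and thus reveals its value, in particular whether it is $\geq 0$. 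Feeding this last game to the assumed equilibrium algorithm closes the loop.

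The only real obstacle is Lemma~\ref{lem:DirectAndUndirect}: manufacturing a symmetric weight function from a directed one without changing the winner. The naive idea---set the undirected weight of $\{v_1,v_2\}$ to the directed weight $w(v_1,v_2)$---is incoherent, since after normalization the edge $(v_1,v_2)$ out of a maximizer vertex is positive while $(v_2,v_1)$ out of a minimizer vertex is negative, and no single symmetric value can be both. So the orientation must be re-encoded structurally (for instance by splitting vertices or inserting gadgets) so that in the symmetrized game each player is still effectively forced to traverse edges in the intended direction, with normalization making any ``reverse'' move carry a weight whose sign favors the opponent; the verification that the threshold-$0$ winner is unchanged is where I expect the work to lie. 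Everything else in the theorem is the composition above together with Theorems~\ref{thm:OptStrategies} and~\ref{thm:DecAndCompTheSame}.
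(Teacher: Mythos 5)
Your assembly of the theorem itself is the same as the paper's: the forward direction is the trivial complete-bipartite encoding plus Theorems~\ref{thm:OptStrategies} and~\ref{thm:DecAndCompTheSame}, and the converse is exactly the chain Theorem~\ref{thm:CH} $\rightarrow$ Lemma~\ref{lem:NormWeightFunction} $\rightarrow$ Lemma~\ref{lem:DirectAndUndirect} $\rightarrow$ Lemma~\ref{lem:ForGameToUnderictedComplete}, so as a composition of the stated lemmas your proof is fine. Where you diverge is in your guess at how Lemma~\ref{lem:DirectAndUndirect} is actually proved. No gadgets or vertex splitting are used: the paper keeps the vertex set unchanged and defines $w'(\{v_1,v_2\})$ to be $w(v_1,v_2)$ if $w(v_1,v_2)+w(v_2,v_1)\geq 0$ and $w(v_2,v_1)$ otherwise. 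That is, the asymmetry you correctly identify is resolved not by re-encoding the orientation structurally but by choosing, per undirected edge, the directed weight whose owner ``deserves'' it according to the sign of the round-trip sum. The correctness argument hinges on \emph{surely losing edges}: a memoryless maximizer winning strategy (threshold $0$) never plays an edge $(v_1,v_2)$ with $w(v_1,v_2)+w(v_2,v_1)<0$, because in a complete bipartite graph the minimizer can answer by always returning the pebble to $v_1$, forcing mean payoff $\tfrac{w(v_1,v_2)+w(v_2,v_1)}{2}<0$. Hence on every edge such a strategy actually uses, $w'=w$, while on minimizer moves normalization ($w(u,v)<0$, $w(v,u)>0$) gives $w'(u,v)\geq w(u,v)$, so every play consistent with the strategy only gains weight under $w'$; the symmetric argument handles the minimizer, and Theorem~\ref{thm:OptStrategies} reduces everything to memoryless strategies. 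So the lemma holds and your composition goes through, but your sketch of its proof points in a different and unfinished direction, and this is precisely where normalization (which your outline treats as mere bookkeeping) does real work.
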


\section{Complexity of Computing Optimal Equilibrium}\label{sect:Optimal}
In this section, we investigate the complexity of computing an optimal equilibrium.
Our main results are summarized in the next theorem:
\begin{thm}\label{thm:OptEqilib}
\begin{enumerate}
\item Optimal equilibrium can be obtained by a profile of pure strategies.
\item If mean-payoff games are in P, then there is a polynomial algorithm for computing an exact optimal equilibrium.
\item If mean-payoff games are not in P, then there is no FPTAS that approximate the social welfare of the optimal equilibrium.
\item There is an FPTAS to compute an $\epsilon$-equilibrium that is $\delta$-optimal.
(Note that it does not necessarily approximate the value of an exact optimal equilibrium.)
\end{enumerate}
\end{thm}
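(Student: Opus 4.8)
\emph{Proof idea.} All four parts run through a single reduction to a finite weighted \emph{game graph} $G$: a vertex records the last $k-1$ actions together with the index of the player to move, an edge is a choice of that player's next action, and the edge carries the vector $u(a^{t})\in[-1,1]^{k}$ of per-round utilities of the resulting action profile. Plays of the alternating game are exactly infinite walks in $G$. For each player $i$ let $v_{i}$ be the value of the two-player zero-sum mean-payoff game on $G$ in which $i$ is the maximizer and the coalition of the other $k-1$ players is the minimizer; by Theorem~\ref{thm:OptStrategies} this value exists and is realised by a memoryless coalition strategy $\tau^{(i)}$ that holds $i$'s mean-payoff to at most $v_{i}$. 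The first step I would carry out is a \emph{minmax/punishment characterisation of equilibria}: (i) in every (possibly mixed, history-dependent) equilibrium player $i$'s utility is $\ge v_{i}$, since against the opponents' fixed strategies he can do at least as well as his own worst case; and (ii) conversely, for any play $\pi$ whose vector of $\liminf$-average utilities $w=(w_{1},\dots,w_{k})$ satisfies $w_{i}\ge v_{i}$ for all $i$, the pure profile ``follow $\pi$; as soon as some player $i$ deviates, all others switch forever to $\tau^{(i)}$'' is an exact Nash equilibrium, because a single deviation is washed out by the limit-average and thereafter the deviator is held to $\le v_{i}\le w_{i}$. Consequently the social welfare of the optimal equilibrium equals $\OPT:=\sup\sum_{i}w_{i}$ over all plays of $G$ whose $\liminf$-average utility vector $w$ satisfies $w_{i}\ge v_{i}$ for every $i$, and $\OPT$ upper bounds the social welfare of every equilibrium.

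For Part~1 it then remains to show the supremum defining $\OPT$ is attained by an ultimately periodic play, since such a play together with the memoryless punishments is a finite-memory pure profile. I would use that any infinite walk eventually stays inside one strongly connected component $S$ of $G$, and that inside $S$ the set of achievable $\liminf$-average utility vectors is exactly the convex hull of the cycle-mean vectors of $S$: one inclusion is the cycle decomposition of finite walks, the other is obtained by interleaving the relevant simple cycles in \emph{balanced} blocks so that every partial average stays uniformly close to the target vector. Hence $\OPT$ is the maximum, over SCCs $S$ and over points $x$ of that convex hull with $x_{i}\ge v_{i}$, of $\sum_{i}x_{i}$; an optimal such $x$ is a rational point of a polytope, hence a finite convex combination of cycle means, and the corresponding balanced-interleaving play is ultimately periodic. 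The remaining point, that a mixed or infinite-memory equilibrium cannot beat $\OPT$, follows because the expected edge frequencies of any profile converge along a subsequence to a circulation whose value vector dominates the equilibrium payoff vector and decomposes into cycle means; together with $w_{i}\ge v_{i}$ this forces $\sum_{i}w_{i}\le\OPT$.

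For Part~2, if mean-payoff games are in P then each $v_{i}$ is computable in polynomial time (it is the value of a single mean-payoff game, Theorem~\ref{thm:DecAndCompTheSame}), and the maximisation defining $\OPT$ is, for each SCC, a polynomial-size linear program over the flow polytope of $G$; solving it gives an optimal circulation, which decomposes into polynomially many cycles and is output---with the punishment strategies---as a succinct finite-memory pure profile, an exact optimal equilibrium. For Part~3 I would reduce the mean-payoff game decision problem to approximating $\OPT$. The Section~\ref{sect:LowerBound} translation already turns a mean-payoff game into an equivalent two-player zero-sum alternating game, but there the social welfare is identically $0$; I would therefore attach a gadget---an extra action, or an extra ``bystander'' player whose utilities are chosen so that a high-welfare equilibrium is available precisely when the original maximizer wins---arranged so that $\OPT$ differs by a fixed positive constant between the winning and the losing instances. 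An FPTAS for $\OPT$ run with accuracy below that constant would then decide the mean-payoff game in polynomial time, which is impossible under the hypothesis.

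For Part~4 I would replace every exact ingredient by an approximate one. The values $v_{i}$ can be approximated from below to within $\epsilon/4$ in time polynomial in the input and $1/\epsilon$---by the value iteration of Zwick and Paterson~\cite{Zwick} truncated after polynomially many steps in $1/\epsilon$, or by the FPTAS of~\cite{mansour}---using that utilities lie in $[-1,1]$. A play whose $\liminf$-average utility is $\ge v_{i}-\epsilon$ for every $i$ yields, via the same punishment construction, an $\epsilon$-equilibrium; conversely the social welfare of the best $\epsilon$-equilibrium is at most the maximum of $\sum_{i}x_{i}$ over achievable vectors $x$ with $x_{i}\ge v_{i}-\epsilon$. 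Running the corresponding polynomial-size linear program with the approximate thresholds $\tilde v_{i}$ and rounding its optimum to a circulation with small denominators gives, in time polynomial in the input, $1/\epsilon$ and $1/\delta$, a succinct pure profile that is an $\epsilon$-equilibrium whose social welfare is within $\delta$ of the optimal $\epsilon$-equilibrium. (As the statement warns, this value need not be within $\delta$ of the optimal \emph{exact} equilibrium, because relaxing to $\epsilon$-equilibria enlarges the feasible set.)

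The step I expect to be the main obstacle is the last claim of Part~1---that no mixed or infinite-memory equilibrium does better than the best finite-memory pure one. This is exactly where the $\liminf$ (rather than $\lim$) payoff is delicate: one must show that the vector of $\liminf$-averages of an arbitrary profile is dominated, for the purposes of social welfare and the thresholds $v_{i}$, by the mean of a single cycle lying within one SCC, and this interacts with the possibility that a profile randomises over which SCC the play ultimately enters. Making this precise is the crux; the other parts are then comparatively routine, modulo the robust gadget needed in Part~3.
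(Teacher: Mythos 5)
Your reduction to the multi-weighted game graph, the minmax/punishment characterization of (exact and approximate) equilibria via the values $\nu_i$, the realization of any mixed profile's payoff vector by a single pure play built from cycles (your circulation/convex-hull-of-cycle-means argument plays the role of the paper's two lemmas that extract a long cyclic path from the positive-probability finite histories and then realize a limit vector by repeating better and better cycles), the LP over feasible mean-payoff vectors for Part~2, and the ``bystander player'' gadget for Part~3 all match the paper's proof in substance. Two small remarks: the game graph here is strongly connected (any action vector reaches any other in $k$ rounds), so the per-SCC case analysis and the worry about randomizing over SCCs are moot; and the achievable $\liminf$-average vectors are not \emph{exactly} the convex hull of cycle means but only dominated by it, which is harmless for the maximization.

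The genuine gap is in Part~4. You approximate each $\nu_i$ to within $\zeta$ and then assert that the LP run with the perturbed thresholds returns a value within $\delta$ of $\OPT_\epsilon$. But $\OPT$ as a function of the threshold vector is \emph{not} continuous --- the feasible polytope of achievable payoff vectors can lose a whole face when a threshold crosses a critical value, so an arbitrarily small perturbation of the $\nu_i$ can change the optimal social welfare by a constant. This is exactly the obstacle the paper flags at the start of Subsection~\ref{subsec:Approx}, and the missing ingredient is a Lipschitz-type bound in the \emph{slack} parameter rather than in the thresholds: one shows $\OPT_{\epsilon+\zeta}-\tfrac{\delta}{2}\leq\OPT_{\epsilon}\leq\OPT_{\epsilon+\zeta}$ for $\zeta\leq\frac{\epsilon\delta}{4k}$ (Corollary~\ref{cor:Approx}), which is proved by taking an optimal $\zeta$-equilibrium and an optimal $(\epsilon+\zeta)$-equilibrium, mixing them with probabilities $\alpha$ and $1-\alpha$ so that $\alpha\zeta+(1-\alpha)(\epsilon+\zeta)=\epsilon$, and invoking your own Part~1 machinery to convert the mixture back into a pure play; a vector version of $\epsilon$-equilibria is then used to sandwich the LP value between $\OPT_\epsilon$ and $\OPT_\epsilon+\tfrac{\delta}{2}$. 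Your write-up, which bounds only $\OPT_{\epsilon-\zeta}\leq S(\pi)\leq\OPT_\epsilon$, cannot conclude $\delta$-optimality without this interpolation step, so as it stands the correctness of the FPTAS is unproved.
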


We will prove Theorem~\ref{thm:OptEqilib} in the next four subsections:
In Subsection~\ref{subsec:naive} we show the naive algorithm for computing an equilibrium that is based on Folk Theorem, and we prove basic properties of equilibria in alternating move games.
In Subsection~\ref{subsec:Pure} we prove Theorem~\ref{thm:OptEqilib}(1).
In Subsection~\ref{subsec:ComValue} we investigate the complexity of computing the social welfare of the optimal equilibrium, and prove Theorem~\ref{thm:OptEqilib}(2) and Theorem~\ref{thm:OptEqilib}(3).
Finally, in Subsection~\ref{subsec:Approx} we prove Theorem~\ref{thm:OptEqilib}(4) which is the main result of this section.

In this section, we will model $n$-action $k$-player alternating move games by a multi-weighted graph, according to the following conventions:
The vertices of the graph are the vertices in the set
$V=(A_1\times A_2 \times \dots \times A_k)\times\{1,\dots,k\}$,
and we say that player $i$ owns the vertex set $V_i = (A_1\times A_2 \times \dots \times A_k) \times \{i\}$.
Intuitively, a vertex is characterized by an action vector and by a player that owns it.
The pair $(u,v)$ is in the edge relation if $u$ is owned by player $i$, $v$ is owned by player $i+1$ (where player $k+1$ is player 1), and there is at most one difference in the action vector of $u$ and $v$ and it is in position $i$.
The weight of every edge is a vector of size $k$ that corresponds to the utility vector of the actions.
Formally, if $u = (\VEC{a_1},i)$ and $v=(\VEC{a_2},i+1)$ then $w(u,v) = (u_1(\VEC{a_2}), u_2(\VEC{a_2}), \dots, u_k(\VEC{a_2}))$.

For an infinite path in the multi-weighted graph we define the dimension $i$ of \emph{mean-payoff vector} of the path to be the mean-payoff value of the path according to dimension $i$.
It is an easy observation that every infinite path in the graph corresponds to a play and its mean-payoff vector corresponds to the utility vector of the play.
We note that the size of the graph is $k^2\cdot n^k$ which is polynomial in the size of the encoding of the utility functions (which is $k\cdot n^k$), hence this graph can be constructed in polynomial time.

\subsection{Basic properties of equilibria}\label{subsec:naive}
The Folk Theorem gives a conceptually simple (but inefficient) technique to construct an equilibrium.
Intuitively, an equilibrium is obtained when each of the players play as if the goal of all the other players is to minimize its utility, and if one of the  players deviates from this strategy, then all the other players switch to playing according to a strategy that will minimize the utility of the rebellious player.
Formally, let $G$ be the corresponding $k$-player game graph that models the alternating move game.
For every player $i$, we consider a zero-sum two-player mean-payoff game graph $G^i$ in which the maximizer owns player $i$ vertices and the minimizer owns the other vertices.
Let $\sigma_i$ be an arbitrary optimal strategy for the maximizer in graph $G^i$, let $\overline{\sigma_i}$ be an arbitrary optimal strategy for the minimizer in $G^i$, and let $\nu_i$ be the value that is obtained by the strategy profile $(\sigma_i,\overline{\sigma_i})$.
Then if every player $i$ plays according to the strategy:
\begin{quote}
If player $j\neq i$ deviated from $\sigma_j$, then play according to $\overline{\sigma_j}$ forever, and otherwise play according to $\sigma_i$
\end{quote}
an equilibrium is formed (since by definition, playing according to $\sigma_i$ assures utility at least $\nu_i$, and deviating from $\sigma_i$ assures utility at most $\nu_i$).

In the next lemma, we extend the basic principle of Folk Theorem and get a characterization of all the equilibria that are obtained by a profile of pure strategies.
\begin{lem}\label{lem:PureStrategies}
Let $(t_1,t_2,\dots,t_k)$ be a utility vector such that $t_i \geq \nu_i$ (for every player $i$), then there exists a pure equilibrium with utility exactly $t_i$ for every player $i$ if and only if there exists an infinite path $\pi$ in the graph $G$ with mean-payoff vector $(t_1,t_2,\dots,t_k)$.
\end{lem}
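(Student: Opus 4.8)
The plan is to prove the two implications separately. The forward implication is essentially bookkeeping, while the backward implication is a ``grim trigger'' strategy in the spirit of the Folk Theorem construction described above.

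For the direction ($\Rightarrow$): a profile of pure strategies induces a single play, i.e.\ a single infinite sequence of action vectors, which (after discarding the finite, payoff-zero prefix of the first $k$ rounds) is an infinite path $\pi$ in $G$, since by construction the edges of $G$ are exactly the legal single-player moves. By the observation that the mean-payoff vector of a path equals the utility vector of the corresponding play, and since a finite prefix does not change a $\liminf$-average, the mean-payoff vector of $\pi$ is exactly the equilibrium utility vector $(t_1,\dots,t_k)$.

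For the direction ($\Leftarrow$): given $\pi$ with mean-payoff vector $(t_1,\dots,t_k)$ and $t_i \ge \nu_i$ for all $i$, I would use the following pure profile. In a bounded initial segment the players cooperatively steer the configuration onto the first vertex of $\pi$, and thereafter play the actions prescribed by $\pi$; however, the instant some player $j$ unilaterally deviates from $\pi$, every other player switches, forever after, to the memoryless optimal minimizer strategy $\overline{\sigma_j}$ of the two-player game $G^j$ (which exists by Theorem~\ref{thm:OptStrategies}). If nobody deviates, the resulting play is $\pi$ up to a finite prefix, so player $i$'s utility is exactly $t_i$. If player $j$ deviates at some round $t$, the play from round $t$ on is consistent with $\overline{\sigma_j}$ in $G^j$, so player $j$'s $\liminf$-average in dimension $j$ is at most the value of $G^j$, which is $\nu_j \le t_j$; the pre-deviation prefix is finite and hence irrelevant to this $\liminf$-average. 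Thus no unilateral deviation is profitable, the profile is a pure equilibrium, and its utility vector is exactly $(t_1,\dots,t_k)$.

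The step I expect to be the crux is verifying that the punishment really caps player $j$'s payoff at $\nu_j$ \emph{independently of the configuration the deviation leads to}. This rests on two ingredients: positional determinacy (Theorem~\ref{thm:OptStrategies}), which supplies a single memoryless minimizer strategy $\overline{\sigma_j}$ that is simultaneously optimal from every vertex of $G^j$; and the fact that the value of $G^j$ does not depend on the starting vertex, which one checks from the structure of $G^j$ --- the maximizer (player $j$) only ever rewrites the $j$-th coordinate of the action vector while the coalition rewrites every other coordinate within each block of $k$ consecutive rounds, so every starting configuration funnels into the same asymptotic play pattern after a bounded transient that does not affect the mean-payoff. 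With these in place, and using throughout that the utility of a play is a $\liminf$-average and hence insensitive to finite prefixes, the remaining verifications are routine.
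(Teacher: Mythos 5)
Your proposal is correct and follows essentially the same route as the paper: the forward direction is the trivial observation that a pure profile induces a single play whose mean-payoff vector is its utility vector, and the backward direction is exactly the paper's ``path equilibrium'' (grim-trigger) construction, where deviation by player $j$ triggers permanent play of the memoryless optimal minimizer strategy $\overline{\sigma_j}$ in $G^j$. Your extra verification that the punishment value $\nu_j$ is independent of the vertex at which the deviation occurs (because every coordinate of the action vector is overwritten within one block of $k$ rounds, so all starting configurations are equivalent up to a finite, $\liminf$-irrelevant prefix) is a genuine subtlety that the paper's one-line justification leaves implicit.
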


\subsection{Optimal equilibrium can be obtained by pure strategies}\label{subsec:Pure}
In this subsection, we extend Lemma~\ref{lem:PureStrategies} also for the case of mixed strategies, and as a consequence we get that optimal equilibrium can be obtained by pure strategies.
Intuitively, we wish to show that if a profile of (mixed) strategies yields a utility vector $(t_1,\dots,t_k)$, then there exists an infinite path in the graph with mean-payoff vector that is greater or equal (in every dimension) to $(t_1,\dots,t_k)$.
Then we get that if a utility vector is obtained by a profile of mixed strategies, and then by Lemma~\ref{lem:PureStrategies} it is also obtained by a profile of pure strategies.

We formally prove the above by the next two lemmas.
\begin{lem}\label{lem:SufCond}
Let $G$ be a multi-weighted graph that is strongly connected, and let $(t_1,\dots,t_k)$ be a vector.
Then if for every $\alpha > 0$ there exists a (finite) cyclic path with average weight at least $t_i - \alpha$ in every dimension, then there exists an infinite path with mean-payoff vector at least $(t_1,\dots,t_k)$.
\end{lem}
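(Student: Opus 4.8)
The plan is to construct the infinite path by concatenation (a diagonalization‑style argument): we chain together the cyclic paths supplied by the hypothesis for the parameters $\alpha=1/j$ (for $j=1,2,\dots$), each repeated a suitably large number of times and glued to the previous block by a short connecting path. Since $G$ is strongly connected, any two vertices are joined by a path of length at most the diameter $d<\infty$, so the connecting paths can be kept short. Fix once and for all a weight bound $W\ge 1$ with $|w_i(e)|\le W$ for every edge $e$ and dimension $i$; note $t_i\le W$ since an average of weights never exceeds $W$. For each $j$ fix a cyclic path $C_j$ with $w_i(C_j)\ge |C_j|\,(t_i-1/j)$ in every dimension $i$ (one exists by hypothesis), and write $L_j=|C_j|$.

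Now build $\pi$ block by block. Block $B_j$ is a shortest path $p_j$ ($|p_j|\le d$) from the current vertex to a vertex of $C_j$, followed by $N_j$ consecutive traversals of $C_j$; the repetition counts $N_1<N_2<\cdots$ will be fixed recursively. Put $\ell_j=|B_j|$ and $S_j=\ell_1+\cdots+\ell_j$. A routine case analysis (splitting a prefix of $B_j$ into the part inside $p_j$, the full copies of $C_j$, and one partial copy of $C_j$) shows that \emph{every} prefix of $B_j$ of length $\ell$ carries, in every dimension $i$, total weight at least $\ell(t_i-1/j)-K_j$, where $K_j:=3dW+3WL_j$ is a constant determined by $G$ and $C_j$. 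Summing this over the full blocks $B_1,\dots,B_{j-1}$ together with a prefix of $B_j$, one obtains that for every $n$ with $S_{j-1}<n\le S_j$ the dimension‑$i$ average of the first $n$ weights is at least
\[
t_i-\frac1n\sum_{l=1}^{j-1}\frac{\ell_l}{l}-\frac1j-\frac1n\sum_{l=1}^{j}K_l .
\]

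It then remains to choose the $N_j$ so that the two sums above are $o(1)$ uniformly over $n\in(S_{j-1},S_j]$; this is the only delicate point, because $n$ may be only marginally larger than $S_{j-1}$. We fix $N_j$ at step $j$ — when $S_{j-1}$ is already known and all the $C_l$, hence all the $K_l$, have been fixed in advance — large enough that $\ell_j\ge j\,S_{j-1}$ and $S_j\ge j\sum_{l=1}^{j+1}K_l$. The first condition makes the blocks super‑increasing, whence $\sum_{l=1}^{j-1}\ell_l/l\le 2\ell_{j-1}/(j-1)\le 2S_{j-1}/(j-1)$, so the first sum contributes at most $2/(j-1)$ once divided by $n>S_{j-1}$; the second condition, applied at the previous step, gives $S_{j-1}\ge (j-1)\sum_{l=1}^{j}K_l$, so the last sum contributes at most $1/(j-1)$. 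Hence, for $n>S_{j-1}$ with $j\ge 2$, the dimension‑$i$ average of the first $n$ weights is at least $t_i-O(1/j)$, and therefore $\liminf_{n\to\infty}$ of this average is $\ge t_i$. As this holds for every dimension $i$ simultaneously, $\pi$ has mean‑payoff vector at least $(t_1,\dots,t_k)$.

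The main obstacle is exactly the bookkeeping just described: organising the order in which the $C_j$ are fixed and the $N_j$ are chosen so that the contributions of the connecting paths, of the partial cycles, and above all of the earlier (smaller‑average) blocks are negligible even at prefix lengths sitting just past a block boundary. The remaining ingredients — the per‑block prefix estimate and the bound on connecting‑path lengths via strong connectivity — are routine.
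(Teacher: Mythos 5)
Your construction is correct and is essentially the paper's proof: both concatenate the cycles $C_{1/j}$ in order of increasing accuracy, joined by shortest connecting paths, with each cycle repeated enough times that the connecting paths, the partial final cycle, and the earlier lower-average blocks become negligible in every prefix average. The only difference is that you spell out the recursive choice of repetition counts and the prefix estimates that the paper compresses into "a simple algebra," and your bookkeeping checks out.
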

\begin{lem}\label{lem:MixedFollowsCycle}
Let $\sigma$ be a profile of (mixed) strategies with utility vector $(t_1,\dots,t_k)$.
Then for every $\alpha > 0$ there exists a cyclic path in the game graph with average weight at least $t_i - \alpha$ in every dimension.
\end{lem}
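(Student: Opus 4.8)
\emph{Proof proposal.} The plan is to reduce the statement to a purely graph-theoretic fact about expected occupation measures, decompose a typical prefix into simple cycles, and then glue many cycles into a single closed walk, using strong connectivity of the game graph $G$ to make the walk's per-dimension average equal to a convex combination of the cycle averages.

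\noindent\emph{Step 1 (fix one good finite horizon).} Since $\sigma$ has utility $t_i$ for player $i$, by definition $\liminf_{T\to\infty}P_{i,T}(\sigma)=t_i$. A $\liminf$ lower bound is an \emph{eventual} bound, so for each $i$ there is $N_i$ with $P_{i,T}(\sigma)\ge t_i-\alpha/3$ for \emph{all} $T\ge N_i$; hence one may fix a single large $T$ for which $P_{i,T}(\sigma)\ge t_i-\alpha/3$ holds \emph{simultaneously} in every dimension. Identifying a length-$T$ prefix $a^1\cdots a^T$ of a play with the corresponding length-$T$ path from the initial vertex of $G$ (the ``first $k$ rounds are zero'' convention shifts the relevant average by at most $k/T$, which we absorb), $P_{i,T}(\sigma)$ equals the expectation, over the $\sigma$-induced distribution on such paths $\pi$, of the dimension-$i$ average weight $w_i(\pi)/|\pi|$, where $w_i(\pi)$ is the total dimension-$i$ weight of $\pi$ and $|\pi|$ its length.

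\noindent\emph{Step 2 (from paths to cycles).} Decompose each $\pi$ by repeatedly excising a simple cycle whenever a vertex repeats; this writes $\pi$ as a simple ``stem'' of length $<|V|$ together with a multiset of simple cycles of $G$, and splits $w_i(\pi)$ additively over the stem and the cycles. As $G$ has only finitely many simple cycles $C$, define $y_C\ge 0$ to be the expected number of occurrences of $C$ in the decomposition of $\pi$, divided by $|\pi|$. Summing the per-path identities over the distribution and bounding the stem (at most $|V|$ edges, each of weight in $[-1,1]$) gives $\sum_C y_C|C|\in[1-O(|V|/T),1]$ and $\sum_C y_C\,w_i(C)\ge t_i-\alpha/3-O((|V|+k)/T)\ge t_i-\alpha/2$ for every $i$, once $T$ is large.

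\noindent\emph{Step 3 (stitch the cycles into one closed walk).} The game graph $G$ is strongly connected: from any $(\VEC{a},i)$, $k$ consecutive moves update each coordinate exactly once in turn, so every vertex with player-index $i$ is reached in $k$ steps and every vertex in fewer than $2k$ steps. Fix a base vertex $v_0$, and for each simple cycle $C$ with $y_C>0$ fix shortest paths from $v_0$ into $C$ and back (each of length $\le|V|$). For a large integer $N$, let $D_N$ be the closed walk at $v_0$ that concatenates, over all these $C$, the block ``travel to $C$, loop around $C$ exactly $\lfloor Ny_C\rfloor$ times, return to $v_0$''. The connecting paths and the roundings contribute to both $|D_N|$ and $w_i(D_N)$ only a quantity bounded by a constant times the number of simple cycles of $G$, independently of $N$, whereas the loops contribute $N\sum_C y_C|C|$ to $|D_N|$ and $N\sum_C y_C\,w_i(C)$ to $w_i(D_N)$. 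Hence $w_i(D_N)/|D_N|\to\bigl(\sum_C y_C\,w_i(C)\bigr)\big/\bigl(\sum_C y_C|C|\bigr)$, which is a convex combination of the cycle averages $w_i(C)/|C|\in[-1,1]$; combining this with the Step~2 bounds and $|t_i|\le 1$ (w.l.o.g.\ $\alpha\le 1$), an elementary estimate shows the limit is $\ge t_i-\alpha$. Choosing $N$ large — the same $N$ serving all $k$ dimensions — yields the required cyclic path $D_N$.

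\noindent\emph{Main obstacle.} The one genuine difficulty is that being good \emph{in expectation in each coordinate} does not by itself produce a single path or cycle that is good in \emph{all} coordinates at once: different sample paths, and different simple cycles in the decomposition, may trade coordinate $i$ off against coordinate $j$ (this already happens when $\sigma$ flips a coin once and commits). Step~3 is exactly what defeats this: concatenating \emph{many} cycles in the proportions $y_C$ forces the resulting walk's per-coordinate average to be the convex combination of the cycle averages, and a convex combination inherits the simultaneous lower bound. Everything else — the $O(k/T)$ boundary term, the $O(|V|/T)$ stem term, and the rounding of $Ny_C$ to integers — is routine book-keeping, controllable because all weights lie in $[-1,1]$.
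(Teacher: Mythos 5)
Your proof is correct, and it follows the same overall strategy as the paper's: pick a finite horizon $T$ at which the expected average is simultaneously within $\alpha/3$ of every $t_i$ (legitimate because the utility is a $\liminf$), extract cycles from the length-$T$ sample paths, and stitch them into one closed walk using shortest connecting paths, so that the walk's per-coordinate average becomes a convex combination that inherits the simultaneous lower bound --- which is exactly the point you correctly flag as the main obstacle. The difference is in the decomposition. The paper extracts from each sample path a \emph{single} longest cyclic sub-path $C_\pi$ (of length at least $|\pi|-2|G|$, hence with nearly the same average as $\pi$ in every dimension), first normalizing ``w.l.o.g.'' so that all positive-probability length-$T$ paths are equiprobable, and then concatenates one cycle per sample path; the convex combination is then the uniform average over $\Pi_T$. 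You instead perform the full simple-cycle decomposition, pass to the expected occupation measure $y_C$, and realize it with integer multiplicities $\lfloor Ny_C\rfloor$ for large $N$. Your route buys a cleaner treatment of the probabilities (no equiprobability normalization, which is slightly delicate when the $\sigma$-induced probabilities are not commensurable) at the cost of handling the finite-but-possibly-huge set of simple cycles and the rounding error, both of which you correctly observe are harmless for a pure existence statement. The bookkeeping you defer ($O(k/T)$ boundary term, $O(|V|/T)$ stem term, the $S=\sum_C y_C|C|\le 1$ denominator when $A_i<0$) all checks out since weights lie in $[-1,1]$ and $T$, $N$ may be taken arbitrarily large.
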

We are now ready to prove that the utility vector of a mixed equilibrium can be obtained by a pure equilibrium.
\begin{prop}\label{prop:MixedNotStrongetThenPure}
Let $\sigma$ be a profile of mixed strategies that induces a utility vector $(t_1,\dots,t_k)$.
Then there exists a profile $\sigma '$ of pure strategies that induces exactly the same utility vector.
Moreover, if $\sigma$ is an equilibrium, then so is $\sigma '$.
\end{prop}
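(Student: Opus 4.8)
The plan is to reduce everything to the existence of one infinite path in the game graph $G$ with the right mean-payoff vector. A profile of pure strategies is just a choice of a single play, i.e.\ a single infinite path in $G$, and by the convention that the first $k$ rounds carry zero utility we may freely prepend a finite prefix; so it is enough to produce an infinite path $\pi$ whose mean-payoff vector equals $(t_1,\dots,t_k)$ and let $\sigma'$ be the pure profile that hard-codes $\pi$. For the ``moreover'' part, note that if $\sigma$ is an equilibrium then $t_i\ge\nu_i$ for every $i$ (otherwise player~$i$ would profit by switching to a maximizer-optimal strategy of $G^i$, which guarantees $\nu_i$). Given such a path $\pi$ with $t_i\ge\nu_i$, the ``if'' direction of Lemma~\ref{lem:PureStrategies} yields the required pure equilibrium; concretely it is the Folk-theorem profile ``everybody follows $\pi$, and the moment some player~$j$ deviates the others switch forever to $\overline{\sigma_j}$'', which gives each player exactly $t_j$ on the main path and at most $\nu_j\le t_j$ to any unilateral deviator.

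It remains to build $\pi$. I would use that $G$ is strongly connected: from any vertex, over the next $k$ turns each player moves once and may set his own coordinate arbitrarily, so any configuration is reachable in at most $2k$ steps; hence Lemma~\ref{lem:SufCond} applies to $G$ directly (if one prefers, pass instead to a strongly connected component containing the cycles supplied by Lemma~\ref{lem:MixedFollowsCycle} for a sequence $\alpha\to 0$, such a component existing since there are finitely many). Now Lemma~\ref{lem:MixedFollowsCycle} says that, because $\sigma$ has utility vector $(t_1,\dots,t_k)$, for every $\alpha>0$ there is a finite cyclic path with average weight at least $t_i-\alpha$ in every coordinate — exactly the hypothesis of Lemma~\ref{lem:SufCond}. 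So Lemma~\ref{lem:SufCond} produces an infinite path whose mean-payoff vector is at least $(t_1,\dots,t_k)$ coordinatewise, and (when $t_i\ge\nu_i$) the ``if'' direction of Lemma~\ref{lem:PureStrategies} already gives a pure equilibrium whose utility vector dominates $(t_1,\dots,t_k)$.

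The last — and, I expect, the most delicate — step is to obtain mean-payoff vector \emph{exactly} $(t_1,\dots,t_k)$ rather than merely $\ge$. For this I would also exhibit, for each coordinate $i$, cyclic paths whose coordinate-$i$ average is close to $t_i$ \emph{from above}: since player~$i$'s utility under $\sigma$ is $\liminf_t P_{i,t}(\sigma)=t_i$ and the weights lie in $[-1,1]$, a Fatou-type estimate shows the expected realized mean-payoff of a play of $\sigma$ is at most $t_i$ in coordinate $i$, and the standard cycle-decomposition of long play prefixes then yields cycles that do not overshoot $t_i$. One then assembles $\pi$ by interleaving long ``high'' blocks (from the cycles of the previous paragraph) with sparser but geometrically lengthening ``grazing'' blocks, cycling through the coordinates, so that in every coordinate simultaneously the running average both returns arbitrarily close to $t_i$ infinitely often and never drops below $t_i-o(1)$; this pins the $\liminf$ to $t_i$ in all $k$ coordinates at once. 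Carrying out this multi-coordinate bookkeeping — in particular keeping the grazing blocks from pulling other coordinates below their targets — is the real obstacle. For the downstream use in Theorem~\ref{thm:OptEqilib}(1) this last step is not even needed: a pure equilibrium with utility vector $\ge(t_1,\dots,t_k)$ has social welfare $\ge\sum_i t_i$, so if $\sigma$ is an optimal equilibrium then the social welfare is maximal and equality must hold in every coordinate.
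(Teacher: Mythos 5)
Your argument is exactly the paper's: Lemma~\ref{lem:MixedFollowsCycle} supplies the near-optimal cycles, Lemma~\ref{lem:SufCond} assembles them into an infinite path whose mean-payoff vector dominates $(t_1,\dots,t_k)$, and Lemma~\ref{lem:PureStrategies} (together with $t_i\ge\nu_i$ in the equilibrium case) converts that path into a pure equilibrium. The only difference is that you explicitly flag the ``exactly versus at least'' issue --- which the paper's own proof silently elides by concluding ``the same utility vector'' from a merely dominating path --- and both your sketched interleaving fix and your observation that domination already suffices for Theorem~\ref{thm:OptEqilib}(1) are sound, so your treatment is, if anything, the more careful one.
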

\begin{proof}
By Lemma~\ref{lem:MixedFollowsCycle} we get that for every $\alpha > 0$ there is a cyclic path with average weight at least $t_i - \alpha$ in every dimension.
Therefore, by Lemma~\ref{lem:SufCond} we get that there is an infinite path in $G$ with mean-payoff vector at least $(t_1,\dots,t_k)$, and by Lemma~\ref{lem:PureStrategies} we get that there is a profile of pure strategies that has utility at least $(t_1,\dots,t_k)$.
If $\sigma$ is an equilibrium we get that $t_i \geq \nu_i$ (since otherwise, player $i$ would deviate to strategy $\sigma_i$), and thus, by Lemma~\ref{lem:PureStrategies}, we get that there is a pure equilibrium that gives the same utility vector. \pfbox
\end{proof}
The next corollary immediately follows from Proposition~\ref{prop:MixedNotStrongetThenPure}.
\begin{cor}[Theorem~\ref{thm:OptEqilib}(1)]\label{cor:OptSocialWelfareIsPure}
An optimal equilibrium can be obtained by a profile of pure strategies.
\end{cor}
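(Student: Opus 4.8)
The plan is to read the corollary off Proposition~\ref{prop:MixedNotStrongetThenPure}, which already contains all the substance. Let $\sigma$ be an optimal equilibrium, a priori a profile of mixed strategies, and let $(t_1,\dots,t_k)$ be the utility vector it induces, so that its social welfare equals $\sum_i t_i$. By Proposition~\ref{prop:MixedNotStrongetThenPure} there is a profile $\sigma'$ of \emph{pure} strategies inducing exactly the utility vector $(t_1,\dots,t_k)$, and since $\sigma$ is an equilibrium, so is $\sigma'$. As the two profiles have the same social welfare, $\sigma'$ is again an equilibrium of maximal social welfare, i.e.\ a pure optimal equilibrium. This is essentially the whole argument.

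The one point that still deserves a word is that an optimal equilibrium should be known to \emph{exist} in the first place, i.e.\ that the supremum of the social welfare over all equilibria is attained; I would establish this on the pure side and then transport it back via Proposition~\ref{prop:MixedNotStrongetThenPure}. By Lemma~\ref{lem:PureStrategies}, the utility vectors of pure equilibria are exactly the mean-payoff vectors of infinite paths of $G$ whose $i$-th coordinate is at least $\nu_i$ for every $i$; this set is nonempty (the Folk Theorem strategy profile described above is pure, since the underlying optimal graph strategies can be taken memoryless, and it realizes such a vector) and bounded (all utilities lie in $[-1,1]$), and --- since $G$ is strongly connected --- a routine compactness argument based on Lemma~\ref{lem:SufCond} shows that the linear functional $(t_1,\dots,t_k)\mapsto\sum_i t_i$ attains a maximum on it. The maximizing vector gives a pure optimal equilibrium, and by Proposition~\ref{prop:MixedNotStrongetThenPure} no mixed equilibrium can do better, so this profile is optimal among \emph{all} equilibria.

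I do not expect a genuine obstacle here: all the real work has already been carried out in Lemmas~\ref{lem:SufCond} and~\ref{lem:MixedFollowsCycle} and in Proposition~\ref{prop:MixedNotStrongetThenPure}. The only mildly delicate point is the attainment argument of the previous paragraph --- in particular the behaviour under coordinatewise domination of the set of mean-payoff vectors realized by infinite paths of a finite strongly connected multi-weighted graph --- and even that follows directly from Lemma~\ref{lem:SufCond}. If one prefers to read the statement merely as ``whenever an optimal equilibrium exists, a pure one of equal social welfare exists,'' then the first paragraph alone is already a complete proof.
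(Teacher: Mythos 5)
Your first paragraph is exactly the paper's argument: the corollary is read off directly from Proposition~\ref{prop:MixedNotStrongetThenPure}, which the paper invokes with no further proof. Your additional paragraph on attainment of the supremum (via Lemma~\ref{lem:PureStrategies}, Lemma~\ref{lem:SufCond} and a closedness/compactness argument) addresses an existence point the paper silently glosses over, and it is a correct and worthwhile supplement rather than a different route.
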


\subsection{The complexity of computing the social welfare of the optimal (exact) equilibrium}\label{subsec:ComValue}
In this section we show that if there is a polynomial algorithm for mean-payoff games, then there is a polynomial algorithm to compute an optimal equilibrium in a $k$-player alternating move games.
We also prove the converse direction, that is, we show that if there is a polynomial algorithm that computes the social welfare of the optimal equilibrium, then there is a polynomial algorithm that solves mean-payoff games.
We prove these two assertions in the next two lemmas.
\begin{lem}\label{lem:MeanPayoffImpliesOptimal}
Suppose that mean-payoff games are in P, then there is a polynomial algorithm that computes the social welfare of an optimal equilibrium.
\end{lem}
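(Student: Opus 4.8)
The plan is to reduce the computation of the optimal social welfare to solving $k$ mean-payoff games together with one linear program. First I would record that the game graph $G$ associated with the $k$-player alternating move game is strongly connected: from any vertex $(\overline{a},i)$ player $i$ may reset coordinate $i$ and pass the turn, so in $k$ moves one reaches $(\overline{b},i)$ for any action vector $\overline{b}$. Next, for each player $i$ I would compute the Folk-theorem threshold $\nu_i$, i.e.\ the value of the two-player zero-sum mean-payoff game $G^i$; since $G^i$ has polynomial size and, by hypothesis, mean-payoff games are in P, Theorem~\ref{thm:DecAndCompTheSame} yields $\nu_i$ in polynomial time. By Corollary~\ref{cor:OptSocialWelfareIsPure} an optimal equilibrium may be taken pure, by Proposition~\ref{prop:MixedNotStrongetThenPure} the utility vector of any (even mixed) equilibrium is the mean-payoff vector of some infinite path of $G$, and by Lemma~\ref{lem:PureStrategies} (together with the standard deviation argument showing $t_i\ge\nu_i$ for any equilibrium) the equilibrium utility vectors are exactly the vectors $(t_1,\dots,t_k)$ with $t_i\ge\nu_i$ that occur as mean-payoff vectors of infinite paths. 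Hence the quantity we must compute is $\sup\{\sum_i t_i\}$ over such vectors.

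The heart of the argument is to identify this supremum with the value of a circulation linear program. Let $\overline{m}(C)\in\Rat^k$ denote the vector of coordinatewise average weights of a simple cycle $C$ of $G$. I claim that the supremum above equals $\max\{\sum_i c_i : c\in\mathrm{conv}\{\overline{m}(C):C\text{ a simple cycle of }G\},\ c_i\ge\nu_i\text{ for all }i\}$. For the upper bound: given an equilibrium with utility vector $t\ge\nu$ realized by a path $\pi$, pass to a subsequence of prefixes of $\pi$ along which the vector of partial averages converges to a limit $q$; decomposing each prefix into simple cycles plus a bounded acyclic remainder shows that each such partial-average vector is within $O(1/\text{length})$ of a point of $\mathrm{conv}\{\overline{m}(C)\}$, so $q$ lies in this (closed) polytope, while $q_i\ge\liminf$ of the $i$-th partial averages $=t_i\ge\nu_i$; thus $\sum_i t_i\le\sum_i q_i$ is bounded by the right-hand side. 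For the lower bound: take an optimal point $c^{*}$ of the right-hand side; being in the convex hull it is a (rational-approximable) convex combination of simple-cycle averages, which after scaling corresponds to a circulation whose support can be spliced, using strong connectivity of $G$, into one closed walk whose per-period splicing cost is $O(1)$ and hence asymptotically negligible; this produces, for every $\alpha>0$, a cyclic path with average at least $c^{*}_i-\alpha$ in every coordinate. Lemma~\ref{lem:SufCond} then gives an infinite path with mean-payoff vector $\ge c^{*}\ge\nu$, which by Lemma~\ref{lem:PureStrategies} is an equilibrium of social welfare $\ge\sum_i c^{*}_i$; combined with the upper bound this also shows the optimum is attained.

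Finally, the right-hand side is computed by the linear program with variables $f:E(G)\to\Rea_{\ge0}$, constraints flow conservation at every vertex, normalization $\sum_e f(e)=1$, and $\sum_e f(e)\,w_i(e)\ge\nu_i$ for every $i$, and objective $\max\sum_e f(e)\sum_i w_i(e)$. Indeed the image of the normalized-circulation polytope under $f\mapsto\bigl(\sum_e f(e)w_i(e)\bigr)_i$ is exactly $\mathrm{conv}\{\overline{m}(C):C\text{ a simple cycle of }G\}$ — a circulation decomposes into simple cycles, and conversely the indicator of a cycle is a circulation — so the LP optimum is the desired number; it is feasible since $G$ is strongly connected and an equilibrium always exists by the Folk theorem. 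Constructing $G$, computing the $\nu_i$ (by the assumed polynomial algorithm for mean-payoff games and Theorem~\ref{thm:DecAndCompTheSame}), and solving one polynomially-sized linear program all run in polynomial time, which proves the lemma.

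The main obstacle I anticipate is the upper-bound half of the structural claim: controlling the $\liminf$-average vector of an arbitrary, possibly non-periodic, infinite path by a point of $\mathrm{conv}\{\overline{m}(C)\}$, via the convergent-subsequence-plus-cycle-decomposition argument. The lower-bound half is comparatively routine given Lemmas~\ref{lem:SufCond} and~\ref{lem:PureStrategies}, and the correspondence between normalized circulations and the convex hull of simple-cycle averages is classical.
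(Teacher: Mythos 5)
Your proof is correct and follows essentially the same route as the paper: compute the punishment values $\nu_i$ with the assumed mean-payoff oracle, reduce via Corollary~\ref{cor:OptSocialWelfareIsPure}, Proposition~\ref{prop:MixedNotStrongetThenPure} and Lemma~\ref{lem:PureStrategies} to maximizing $\sum_i t_i$ over achievable mean-payoff vectors satisfying $t_i\ge\nu_i$, and solve this by a polynomially sized linear program. The only difference is that where the paper invokes Theorem~18 of \cite{UmmelsW11} as a black box for the linear-constraint characterization of achievable $\liminf$ mean-payoff vectors, you prove that characterization yourself via the normalized-circulation polytope and the convex hull of simple-cycle averages, which is a correct and self-contained substitute.
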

\begin{proof}
Due to Corollary~\ref{cor:OptSocialWelfareIsPure}, it is enough to consider only pure strategies, and due to Lemma~\ref{lem:PureStrategies} a vector of utilities $(t_1,\dots,t_k)$ is obtained by a pure equilibrium if and only if $t_i\geq \nu_i$ (for $i=1,\dots,k$) and there is an infinite path in the game graph with mean-payoff $(t_1,\dots,t_k)$.
Since we assume that there is a polynomial algorithm for computing $\nu_i$,
our problem boils down to 
\begin{quote}
Find the maximal value of $\sum_{i=1}^k t_i$ subject to
\begin{itemize}
\item $t_i \geq \nu_i$; and
\item there exists an infinite path with mean-payoff vector at least $(t_1,\dots,t_k)$
\end{itemize}
\end{quote}
It was shown in~\cite{UmmelsW11} (in the proof of 
Theorem~18) that the problem of deciding whether there exists an infinite path with mean-payoff vector at least $(t_1,\dots,t_k)$ can be reduced (in polynomial time) to a set of linear constraints.
Moreover, the generated set of constraints remain linear even when $t_i$ is a variable.
Hence, we can find a feasible threshold vector $(t_1,\dots,t_k)$ (that is, a vector that is realizable by an infinite path in the graph) that maximizes $\sum_{i=1}^k t_i$ by linear programming.
Therefore, if we have a polynomial algorithm that computes $\nu_i$, then we can find the social welfare of the optimal equilibrium in polynomial time. \pfbox
\end{proof}
Lemma~\ref{lem:MeanPayoffImpliesOptimal} proves Theorem~\ref{thm:OptEqilib}(2) and gives an upper bound to the complexity of computing optimal equilibrium.
In the next lemma we show that this bound is tight, and that the social welfare of the optimal equilibrium cannot be approximated, unless mean-payoff games are in P.
\begin{lem}[Theorem~\ref{thm:OptEqilib}(3)]\label{lem:SocialLowerBound}
There is no FPTAS that approximates the social welfare of an optimal equilibrium, unless mean-payoff games are in P.
\end{lem}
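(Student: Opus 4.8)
The plan is to give a polynomial reduction from the decision problem for mean-payoff games to the problem of deciding whether the social welfare of an optimal equilibrium of an alternating-move game is $0$. An FPTAS, run with relative error $\tfrac12$, returns a value that is $0$ exactly when the quantity it approximates is $0$ and a strictly positive value otherwise; hence such an FPTAS would solve this decision problem — and therefore, by Theorem~\ref{thm:DecAndCompTheSame}, mean-payoff games — in polynomial time.

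For the source instance I would use Theorem~\ref{thm:Sec1} together with Theorem~\ref{thm:DecAndCompTheSame}: it suffices to start from a two-player zero-sum alternating-move game $\mathcal{Z}$ with payoffs in $[-1,1]$ and to decide whether $v := \nu_1(\mathcal{Z}) \ge 0$; after rescaling we may assume $v \in [-\tfrac12,\tfrac12]$, and recall that $v$ is either $\ge 0$ or negative and bounded away from $0$. From $\mathcal{Z}$ I would build, in polynomial time, a game $\mathcal{G}$ whose ``punishment sub-game'' is $\mathcal{Z}$ together with one fresh \emph{cooperate} action per player. On the cells in which exactly one player has chosen his cooperate action the payoffs are set so that the lone cooperator is penalised ($-1$) and each opponent is rewarded ($+1$); a routine check then gives that the punishment values of $\mathcal{G}$ coincide with those inherited from $\mathcal{Z}$ (in particular $\nu_1(\mathcal{G}) = v$), since in each zero-sum punishment sub-game a move to a cooperate action is never profitable for the punishing coalition. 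The all-cooperate cell is given a payoff vector of strictly positive social welfare in which the player that inherits $v$ receives exactly $0$.

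To analyse the optimal equilibrium I would invoke Lemma~\ref{lem:PureStrategies} and its extension to mixed profiles in Proposition~\ref{prop:MixedNotStrongetThenPure}: the equilibrium payoff vectors of $\mathcal{G}$ are exactly the vectors $\vec t$ with $t_i \ge \nu_i(\mathcal{G})$ for all $i$ that are realisable as the mean-payoff vector of an infinite path in the game graph, so the social welfare of an optimal equilibrium is $\max\{\sum_i t_i : \vec t\text{ realisable},\ t_i \ge \nu_i(\mathcal{G})\}$. When $v \ge 0$ the constraint $t_1 \ge v$ together with the other punishment constraints should leave only realisable vectors of non-positive social welfare — the all-cooperate cycle and every cyclic path close to it is cut off, and the surviving play stays in the zero-sum region where the social welfare of any cycle is $0$ — so the optimum is $0$. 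When $v < 0$ the all-cooperate cycle (or a cyclic path arbitrarily close to it) satisfies all punishment constraints and has strictly positive social welfare, so the optimum is $>0$. Feeding $\mathcal{G}$ to the hypothetical FPTAS as above then decides whether $v \ge 0$, completing the reduction.

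The hard part is the geometry behind the claim ``the optimum is exactly $0$ when $v \ge 0$''. Because the game graph is strongly connected, the set of realisable mean-payoff vectors is convex, and the optimal social welfare is therefore a continuous function of $\mathcal{G}$; the collapse to $0$ must consequently be engineered to occur precisely at the threshold $v = 0$, which is why the cooperating player must be paid exactly his punishment threshold there. The main work is to choose the payoffs on the cooperate cells (in particular the sign with which a lone cooperator is rewarded or penalised) and on the all-cooperate cell so that the punishment values, the realisable set, and the cooperation cycle interact correctly, and then to verify that for $v > 0$ the cut $t_1 \ge v$ really removes \emph{every} realisable vector of positive social welfare, not merely the all-cooperate cycle itself; once that is in place the use of the Folk-theorem characterisation and of Proposition~\ref{prop:MixedNotStrongetThenPure} is routine.
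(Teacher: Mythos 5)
Your reduction has a genuine gap, and it sits exactly where you located it: the claim that the optimum collapses to $0$ when $v\geq 0$ is not merely unproven for your construction, it is false. Concretely, let $u_{\max}=\max_{a_1,a_2}u_1(a_1,a_2)$ and let $c>0$ be the social welfare you assign to the all-cooperate cell (player 1 gets $0$ there, so player 2 gets $c$). The game graph contains both the all-cooperate cycle, with payoff vector $(0,c)$, and the cycle in which both players repeat the argmax pair, with payoff vector $(u_{\max},-u_{\max})$. Since the graph is strongly connected, Lemma~\ref{lem:SufCond} makes every convex combination of these two cycle averages realisable as the mean-payoff vector of an infinite path. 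Taking weight $1-v/u_{\max}$ on the cooperate cycle gives a realisable vector with $t_1=v\geq\nu_1$ and $t_2=(1-v/u_{\max})c-v\geq -v=\nu_2$, hence by Lemma~\ref{lem:PureStrategies} an equilibrium of social welfare $(1-v/u_{\max})c>0$ --- for \emph{every} instance with $0<v<u_{\max}$, i.e.\ for all but degenerate games. The cut $t_1\geq v$ removes the all-cooperate vertex of the realisable polytope but not its neighbourhood, because player 1 can pay for cooperation rounds with zero-sum rounds in which player 2 feeds him more than $v$. Your own continuity observation already predicts this: the optimum is the maximum of a linear functional over a convex set intersected with the half-space $t_1\geq v$, hence continuous in $v$, so no choice of payoffs on the cooperate cells can make it equal some fixed $c>0$ for all $v\leq 0$ and jump to $0$ for all $v>0$. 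A second, independent problem is that a ``$0$ versus positive'' gap only defeats a \emph{multiplicative} approximator; the approximation notion matched by Theorem~\ref{thm:OptEqilib}(4) is additive (and relative error is not even well defined for a welfare ranging over $[-k,k]$), so the gap you need must be a constant additive one.

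The paper takes a structurally different route that is designed around exactly this obstruction: it does not perturb the two zero-sum players' payoffs at all. It builds a \emph{three}-player game in which $u_2'=-u_1'$ on every cell (so players 1 and 2 contribute exactly $0$ to the welfare of every play), player 1 gets a fresh action worth $0$ to himself, and a dummy third player with a single action receives $+1$ when player 1 plays the fresh action and $-1$ otherwise. The social welfare of any play then equals player 3's utility, and the intended gap is the constant $-1$ versus $+1$, which rules out approximation ``even for $\epsilon=1$''. Be aware, however, that the crux is the same in both constructions: one must show that the punishment constraint $t_1\geq v>0$ forces player 1 to play the fresh action only a negligible fraction of the time, and the convexity argument above shows this needs a genuine proof (player 1 can again compensate cooperation rounds by rounds in which player 2 hands him more than $v$). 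If you pursue this lemma, that step must be made explicit --- for instance by rescaling so that the welfare signal dominates any such compensation --- rather than deferred as in your current draft.
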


\subsection{An FPTAS to compute an $\epsilon$-equilibrium that is $\delta$-optimal}\label{subsec:Approx}
In this subsection, we assume that the utilities of the players are scaled to rationals in $[-1,1]$, and we will describe an algorithm that computes an $\epsilon$-equilibrium that is $\delta$-optimal (with respect to all $\epsilon$-equilibria) and runs in time complexity that is polynomial in the input size and in $\frac{1}{\epsilon}$ and $\frac{1}{\delta}$.
Subsection~\ref{subsec:ComValue} suggests that in order to compute a $\delta$-optimal $\epsilon$-equilibrium we should approximate (by some value) the values of $\nu_1,\dots,\nu_k$ and then compute the optimal infinite path (with respect to the sum of utilities) that has utility for player $i$ that is greater than the approximation of $\nu_i$.
However, this approach would not work, since the optimal social welfare is not a continuous function with respect to the values $\nu_1,\dots,\nu_k$.

We denote by $\OPT_\epsilon$ the social welfare of the optimal $\epsilon$-equilibrium.
We base our solution on the next lemma, which gives two key properties of $\OPT_\epsilon$.
\begin{lem}\label{lem:Two}
\begin{enumerate}
\item If $\epsilon_1 \geq \epsilon_2$, then $\OPT_{\epsilon_1} \geq \OPT_{\epsilon_2}$.
\item For every $\alpha \in [0,1]$ and $\epsilon_1,\epsilon_2 > 0$, let $\epsilon = \alpha \epsilon_1 + (1-\alpha)\epsilon_2$, then there exists an $\epsilon$-equilibrium with social welfare 
$\alpha \OPT_{\epsilon_1} + (1-\alpha)\OPT_{\epsilon_2}$.
\end{enumerate}
\end{lem}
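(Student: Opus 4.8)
The plan is to prove both parts by direct constructions on $\epsilon$-equilibria, using Corollary~\ref{cor:OptSocialWelfareIsPure} and Lemma~\ref{lem:PureStrategies} to reduce to reasoning about infinite paths in the game graph $G$.

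For part~(1): monotonicity is immediate from the definitions. Any $\epsilon_2$-equilibrium is in particular an $\epsilon_1$-equilibrium whenever $\epsilon_1 \geq \epsilon_2$ (if no player can gain more than $\epsilon_2$ by deviating, then certainly no player can gain more than $\epsilon_1$). Hence the set of $\epsilon_1$-equilibria contains the set of $\epsilon_2$-equilibria, so the supremum of social welfare over the former — namely $\OPT_{\epsilon_1}$ — is at least $\OPT_{\epsilon_2}$. (One should note that the optimum is attained, by Corollary~\ref{cor:OptSocialWelfareIsPure} the optimal $\epsilon$-equilibrium is realized by a pure profile, hence by an infinite path, so the "optimal $\epsilon$-equilibrium" is well-defined.)

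For part~(2): the idea is to interleave. By Corollary~\ref{cor:OptSocialWelfareIsPure} and Lemma~\ref{lem:PureStrategies}, the optimal $\epsilon_1$-equilibrium corresponds to an infinite path $\pi_1$ with mean-payoff vector $\VEC{t^1}$ whose sum is $\OPT_{\epsilon_1}$, and similarly $\pi_2$ with mean-payoff vector $\VEC{t^2}$ and sum $\OPT_{\epsilon_2}$; moreover the key inequality from an $\epsilon_j$-equilibrium is that $t^j_i \geq \nu_i - \epsilon_j$ for every player $i$ (since otherwise player $i$ could deviate to $\sigma_i$ and gain more than $\epsilon_j$). I would build a new infinite path $\pi$ that spends an $\alpha$-fraction of "time, in the limit" following long finite prefixes of $\pi_1$ (made cyclic, in the spirit of Lemma~\ref{lem:SufCond}) and a $(1-\alpha)$-fraction following $\pi_2$, with the block lengths growing fast enough that the Cesàro averages converge to the convex combination $\alpha \VEC{t^1} + (1-\alpha)\VEC{t^2}$ in every dimension; the $\liminf$ along $\pi$ then equals $\alpha t^1_i + (1-\alpha) t^2_i$ in each coordinate $i$, and the social welfare along $\pi$ is $\alpha \OPT_{\epsilon_1} + (1-\alpha)\OPT_{\epsilon_2}$. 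For this $\pi$, the per-player utility satisfies
\[
\alpha t^1_i + (1-\alpha) t^2_i \;\geq\; \alpha(\nu_i - \epsilon_1) + (1-\alpha)(\nu_i - \epsilon_2) \;=\; \nu_i - \epsilon,
\]
so by the Folk-Theorem construction (punishing any deviation from the prescribed path with $\overline{\sigma_j}$), the profile realizing $\pi$ is an $\epsilon$-equilibrium, since a deviator can gain at most $\nu_i - (\nu_i - \epsilon) = \epsilon$.

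The main obstacle is the careful bookkeeping in the interleaving argument: one must ensure that the $\liminf$ of the average weight vector along $\pi$ is exactly $\alpha \VEC{t^1} + (1-\alpha)\VEC{t^2}$ and not something smaller in some coordinate because of transients at block boundaries or because the worse of the two phases dominates the $\liminf$. This is handled by choosing the $j$-th block lengths to grow geometrically (so the contribution of the current block dominates all previous history) while keeping the cumulative ratio of time spent in type-1 versus type-2 blocks converging to $\alpha:(1-\alpha)$; since $G$ is finite the bounded transients when switching between a prefix of $\pi_1$ and a prefix of $\pi_2$ are washed out in the limit. A subtlety worth flagging is that $\pi_1,\pi_2$ need not lie in the same strongly connected component, but they are both infinite paths in $G$, and as in the proof of Lemma~\ref{lem:SufCond} we may first replace each by a sequence of cyclic paths with averages approaching $\VEC{t^j}$ arbitrarily well, after which the two families of cycles can be stitched together through connecting segments of bounded length; alternatively, one argues directly that the required path exists by the same linear-constraint characterization of~\cite{UmmelsW11} used in Lemma~\ref{lem:MeanPayoffImpliesOptimal}, since $\alpha \VEC{t^1} + (1-\alpha)\VEC{t^2}$ is a convex combination of two realizable mean-payoff vectors and the set of realizable vectors (intersected with the relevant component) is convex.
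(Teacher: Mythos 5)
Your part~(1) and your overall plan for part~(2) --- produce an infinite path whose coordinate-$i$ utility is at least $\nu_i-\epsilon$ and whose social welfare is the stated convex combination, then turn it into an $\epsilon$-equilibrium by the Folk-Theorem punishment --- match the paper's outline, and both the inequality $t^j_i\geq\nu_i-\epsilon_j$ and the final ``a deviator gains at most $\nu_i-(\nu_i-\epsilon)=\epsilon$'' step are correct. The gap is in the interleaving itself. You propose block lengths that ``grow geometrically (so the contribution of the current block dominates all previous history)'' while the cumulative time ratio converges to $\alpha:(1-\alpha)$. These two requirements are incompatible, and the first one is fatal: if the current block dominates all previous history, then at the end of a type-$1$ block the running average in coordinate $i$ is close to $t^1_i$, and at the end of a type-$2$ block it is close to $t^2_i$, so the $\liminf$ in coordinate $i$ is (approximately) $\min(t^1_i,t^2_i)$ rather than $\alpha t^1_i+(1-\alpha)t^2_i$. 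Summing over $i$, the social welfare of your path can then fall strictly below $\alpha\OPT_{\epsilon_1}+(1-\alpha)\OPT_{\epsilon_2}$, and a coordinate can even drop below $\nu_i-\epsilon$. What you need is the opposite regime: alternate at a granularity that is \emph{short} relative to the elapsed time --- e.g.\ in phase $m$, repeat many times the pattern ``$\lceil\alpha M_m\rceil$ traversals of a cycle approximating $\VEC{t^1}$ to within $1/m$ in every dimension, then $\lceil(1-\alpha)M_m\rceil$ traversals of one approximating $\VEC{t^2}$'' --- so that \emph{every} sufficiently long prefix has average close to the convex combination in all coordinates simultaneously. Your fallback via convexity of the set of realizable mean-payoff vectors is sound in spirit (the graph $G$ of Section~\ref{sect:Optimal} is in fact strongly connected, so the cross-component issue you flag does not arise), but justifying that convexity is precisely this fine-grained interleaving or the linear-programming characterization, so it cannot simply be asserted.

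For comparison, the paper sidesteps this bookkeeping entirely: it defines a \emph{mixed} profile that plays $\sigma_{\epsilon_1}$ with probability $\alpha$ and $\sigma_{\epsilon_2}$ with probability $1-\alpha$, observes that the expected round-$t$ average payoff is the pointwise convex combination, so that $\liminf_t\bigl(\alpha P^1_{i,t}+(1-\alpha)P^2_{i,t}\bigr)\geq\alpha t^1_i+(1-\alpha)t^2_i$ in every dimension, and then invokes Proposition~\ref{prop:MixedNotStrongetThenPure} to convert this mixed profile into a pure path with at least the same utility vector. That is, it delegates the delicate cycle-stitching to Lemmas~\ref{lem:MixedFollowsCycle} and~\ref{lem:SufCond}, where it is done once and for all; if you route your argument through that proposition instead of hand-rolling the interleaving, your proof goes through.
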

\begin{proof}
The first item of the lemma is a trivial observation.
In order to prove the second item, we observe that by Lemma~\ref{lem:PureStrategies} (and since by Proposition~\ref{prop:MixedNotStrongetThenPure} it is enough to consider only pure equilibria) it is enough to prove that there is an infinite path $\pi$ with utility at least $\nu_i - \epsilon$ in every dimension and with social welfare
$\alpha \OPT_{\epsilon_1} + (1-\alpha)\OPT_{\epsilon_2}$.
By Proposition~\ref{prop:MixedNotStrongetThenPure}, it is enough to show that there is a profile $\sigma$ of mixed strategies (that need not be an equilibrium) that has a utility at least $\nu - \epsilon$ in every dimension and has a social welfare at least $\alpha \OPT_{\epsilon_1} + (1-\alpha)\OPT_{\epsilon_2}$.
The construction of $\sigma$ is trivial.
For $i=1,2$, let $\sigma_{\epsilon_i}$ be a profile of strategies that induces an $\epsilon_i$-optimal equilibrium, then we construct $\sigma$ by playing according to $\sigma_{\epsilon_1}$ with probability $\alpha$ and playing according to $\sigma_{\epsilon_2}$ with probability $1-\alpha$.
\pfbox
\end{proof}
\begin{cor}\label{cor:Approx}
For every $\zeta \leq \frac{\epsilon\delta}{4k}$ we have
$\OPT_{\epsilon+\zeta} - \frac{\delta}{2} \leq \OPT_\epsilon \leq  \OPT_{\epsilon+\zeta}$
\end{cor}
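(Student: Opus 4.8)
The plan is to derive both inequalities directly from Lemma~\ref{lem:Two}, instantiating the convex-combination property at a cleverly chosen $\alpha$. For the upper bound $\OPT_\epsilon \leq \OPT_{\epsilon+\zeta}$, I would simply invoke monotonicity (Lemma~\ref{lem:Two}(1)) since $\epsilon+\zeta \geq \epsilon$; this is immediate and requires no computation. The substance is in the lower bound $\OPT_{\epsilon+\zeta} - \frac{\delta}{2} \leq \OPT_\epsilon$, which says that shrinking the slack parameter from $\epsilon+\zeta$ down to $\epsilon$ costs at most $\frac{\delta}{2}$ in social welfare. Here I want to write $\epsilon$ as a convex combination of $\epsilon+\zeta$ and something smaller — but note $\epsilon$ itself already lies strictly between, say, $0$ and $\epsilon+\zeta$, so I will take $\epsilon_1 = \epsilon+\zeta$, $\epsilon_2$ to be some value $\leq \epsilon$ (for instance $\epsilon_2 = \epsilon/2$, or any positive value below $\epsilon$), and solve $\epsilon = \alpha(\epsilon+\zeta) + (1-\alpha)\epsilon_2$ for $\alpha \in (0,1)$.

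The key step is to bound $1-\alpha$ from above. From $\epsilon = \alpha(\epsilon+\zeta) + (1-\alpha)\epsilon_2$ we get $(1-\alpha)(\epsilon+\zeta-\epsilon_2) = \zeta$, hence $1-\alpha = \frac{\zeta}{\epsilon+\zeta-\epsilon_2}$. Choosing $\epsilon_2 = 0$ is not allowed since we need $\epsilon_2 > 0$, but taking $\epsilon_2$ arbitrarily close to $0$ (or, cleanly, $\epsilon_2 = \epsilon/2$, giving denominator $\geq \epsilon/2$) yields $1-\alpha \leq \frac{2\zeta}{\epsilon} \leq \frac{\delta}{2k} \leq \frac{\delta}{2}$ using the hypothesis $\zeta \leq \frac{\epsilon\delta}{4k}$ and $k\geq 1$. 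Then Lemma~\ref{lem:Two}(2) hands me an $\epsilon$-equilibrium with social welfare $\alpha\OPT_{\epsilon+\zeta} + (1-\alpha)\OPT_{\epsilon_2}$, so
\[
\OPT_\epsilon \;\geq\; \alpha\OPT_{\epsilon+\zeta} + (1-\alpha)\OPT_{\epsilon_2} \;=\; \OPT_{\epsilon+\zeta} - (1-\alpha)\bigl(\OPT_{\epsilon+\zeta} - \OPT_{\epsilon_2}\bigr).
\]
Since all utilities lie in $[-1,1]$, the social welfare of any profile lies in $[-k,k]$, so $\OPT_{\epsilon+\zeta} - \OPT_{\epsilon_2} \leq 2k$, and therefore $(1-\alpha)(\OPT_{\epsilon+\zeta} - \OPT_{\epsilon_2}) \leq \frac{2\zeta}{\epsilon}\cdot 2k \leq \frac{\delta}{2}$, which gives $\OPT_\epsilon \geq \OPT_{\epsilon+\zeta} - \frac{\delta}{2}$ as required.

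The main obstacle — really the only place that needs care — is making the constants line up: one must pick $\epsilon_2$ so that the denominator $\epsilon+\zeta-\epsilon_2$ is bounded below by a constant multiple of $\epsilon$ (I suggest $\epsilon_2 = \epsilon/2$ so the denominator exceeds $\epsilon/2$), and then verify that $\frac{2\zeta}{\epsilon}\cdot 2k = \frac{4k\zeta}{\epsilon} \leq \delta$ follows exactly from $\zeta \leq \frac{\epsilon\delta}{4k}$. One should also confirm that $\alpha \in (0,1)$ genuinely holds (it does: $\zeta > 0$ forces $\alpha < 1$, and $\epsilon > \epsilon_2$ forces $\alpha > 0$), so that Lemma~\ref{lem:Two}(2) applies. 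Everything else is bookkeeping.
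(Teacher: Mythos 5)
Your approach is essentially the paper's: the upper bound is monotonicity (Lemma~\ref{lem:Two}(1)), and the lower bound writes $\epsilon$ as a convex combination of $\epsilon+\zeta$ and a smaller slack $\epsilon_2$, applies Lemma~\ref{lem:Two}(2), and bounds the loss by $(1-\alpha)\cdot 2k$ using the fact that the social welfare of any profile lies in $[-k,k]$. The one thing you must fix is your concrete instantiation $\epsilon_2=\epsilon/2$: there the denominator is $\epsilon+\zeta-\epsilon_2=\frac{\epsilon}{2}+\zeta$, so you only get $1-\alpha\leq\frac{2\zeta}{\epsilon}$ and the loss is bounded by $\frac{4k\zeta}{\epsilon}\leq\delta$, \emph{not} $\frac{\delta}{2}$ --- indeed your own closing paragraph derives exactly $\frac{4k\zeta}{\epsilon}\leq\delta$, which contradicts the ``$\leq\frac{\delta}{2}$'' you assert in the displayed chain. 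To obtain the stated constant you need the denominator $\epsilon+\zeta-\epsilon_2$ to be at least $\epsilon$, i.e.\ $\epsilon_2\leq\zeta$; the paper takes $\epsilon_2=\zeta$, so that $1-\alpha=\frac{\zeta}{\epsilon}$ exactly and the loss is at most $\frac{2k\zeta}{\epsilon}\leq\frac{\delta}{2}$. Your alternative suggestion of letting $\epsilon_2\to 0^{+}$ also works for the same reason (the denominator tends to $\epsilon+\zeta\geq\epsilon$), and $\OPT_{\epsilon_2}$ remains well defined and $\geq -k$ for every $\epsilon_2>0$ since an exact equilibrium always exists. With that repair your argument coincides with the paper's proof.
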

By the above corollary, to approximate $\OPT_\epsilon$, it is enough to approximate by $\frac{\delta}{2}$ the value of $\OPT_{\epsilon+\zeta}$ for some $\zeta \leq \frac{\epsilon\delta}{4k}$.
For this purpose, we extend the notion of $\epsilon$-equilibrium also for $k$-dimensional vectors, and we say that a profile of strategies is a $\VEC{\beta}$-equilibrium if player $i$ cannot improve its utility by at least $\beta_i$.
Let us denote by $\min\VEC{\beta}$ and by $\max\VEC{\beta}$ the minimal and maximal element of $\beta$ (respectively).
Then by definition,
$\OPT_{\min\VEC{\beta}} \leq \OPT_{\VEC{\beta}} \leq \OPT_{\max\VEC{\beta}}$,
and by Corollary~\ref{cor:Approx} we get that
$\OPT_{\VEC{\beta}} \leq \OPT_{\max\VEC{\beta}} \leq \OPT_{\VEC{\beta}} + (\max\VEC{\beta}-\min\VEC{\beta})\cdot\frac{4k}{\epsilon}$.

We are now ready to present an FPTAS that computes a $\delta$-approximation for $\OPT_\epsilon$:
(1)~Set $\zeta = \frac{\epsilon\delta}{4k}$, and compute a $\zeta$ approximation of $\nu_i$ for every player $i$, and denote it by $r_i$.
(2)~Compute the optimal path $\pi$ (with respect to social welfare) that has utility at least $r_i - (\epsilon-\zeta)$ for every player, and return its social welfare.

We note that we can execute the first step of the algorithm in polynomial time due to~\cite{mansour}(Observation~3.1), and we can execute the second step in polynomial time by solving the linear programming problem that we described in the proof of Lemma~\ref{lem:MeanPayoffImpliesOptimal}.
The next lemma proves the correctness of our approximation algorithm.
\begin{lem}\label{lem:ApproxIsRight}
Let $S(\pi)$ be the social welfare of $\pi$.
Then $S(\pi)-\delta \leq \OPT_\epsilon \leq S(\pi)$
\end{lem}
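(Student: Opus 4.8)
The plan is to recast the number $S(\pi)$ returned by the algorithm as a vector-relaxed optimum $\OPT_{\VEC{\beta}}$, and then squeeze $\OPT_{\VEC{\beta}}$ between $\OPT_\epsilon$ and $\OPT_\epsilon+\delta$ using only the monotonicity and concavity of $\OPT$ as a function of the deviation tolerance, as packaged in Lemma~\ref{lem:Two} and Corollary~\ref{cor:Approx}. Crucially, this avoids the non-continuity of the optimal social welfare in the thresholds $\nu_i$: we never perturb $\nu_i$ itself, we only compare $\OPT$ at different tolerance levels.

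First I would identify $S(\pi)$ with a vector optimum. By Lemma~\ref{lem:PureStrategies} and Proposition~\ref{prop:MixedNotStrongetThenPure}, in the $\epsilon$-relaxed form already used in the proof of Lemma~\ref{lem:Two}, the optimal social welfare of a $\VEC{\beta}$-equilibrium equals $\max\{\sum_i v_i(\rho)\}$ over infinite paths $\rho$ with $v_i(\rho)\ge \nu_i-\beta_i$ in every dimension $i$, where $v_i(\rho)$ is the dimension-$i$ mean-payoff of $\rho$. Step~(2) of the algorithm maximizes exactly this sum under a per-player threshold $\theta_i$, so, writing $\beta_i:=\nu_i-\theta_i$, we obtain $S(\pi)=\OPT_{\VEC{\beta}}$. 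The next step is to locate $\VEC{\beta}$: since $r_i$ approximates $\nu_i$ to within $\zeta=\frac{\epsilon\delta}{4k}$ and the threshold carries enough downward slack, the threshold lands in the band $\nu_i-\epsilon-2\zeta\le\theta_i\le\nu_i-\epsilon$, equivalently $\epsilon\le\beta_i\le\epsilon+2\zeta$ for every $i$. The lower end $\beta_i\ge\epsilon$ expresses that the threshold is low enough that \emph{every} $\epsilon$-equilibrium path remains feasible for the optimization; the upper end $\beta_i\le\epsilon+2\zeta$ says the algorithm over-relaxes by at most $2\zeta$ per player.

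The two inequalities then fall out. For the upper bound $\OPT_\epsilon\le S(\pi)$: since $\min\VEC{\beta}\ge\epsilon$, monotonicity (Lemma~\ref{lem:Two}(1)) gives $\OPT_\epsilon\le\OPT_{\min\VEC{\beta}}$, and the comparison $\OPT_{\min\VEC{\beta}}\le\OPT_{\VEC{\beta}}$ noted in the text yields $\OPT_\epsilon\le\OPT_{\VEC{\beta}}=S(\pi)$. For the lower bound $S(\pi)\le\OPT_\epsilon+\delta$: the comparison $\OPT_{\VEC{\beta}}\le\OPT_{\max\VEC{\beta}}$ together with $\max\VEC{\beta}\le\epsilon+2\zeta$ and monotonicity gives $S(\pi)\le\OPT_{\epsilon+2\zeta}$; applying Corollary~\ref{cor:Approx} at base point $\epsilon$, and then the same concavity estimate re-instantiated on $[\epsilon+\zeta,\epsilon+2\zeta]$ (legitimate because $\zeta\le\frac{\epsilon\delta}{4k}\le\frac{(\epsilon+\zeta)\delta}{4k}$), yields $\OPT_{\epsilon+2\zeta}\le\OPT_{\epsilon+\zeta}+\frac{\delta}{2}\le\OPT_\epsilon+\delta$. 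Combining the two chains gives precisely $S(\pi)-\delta\le\OPT_\epsilon\le S(\pi)$.

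I expect the delicate point to be the lower end $\beta_i\ge\epsilon$ of the second step, i.e.\ verifying that the threshold set in step~(2) is low enough that no $\epsilon$-equilibrium is inadvertently excluded from the feasible region. This is exactly what forces $S(\pi)$ to be an over-estimate of $\OPT_\epsilon$ rather than an under-estimate, and it is here that the precise amount of slack built into the threshold must dominate the (one-sided) error of the $\zeta$-approximation of $\nu_i$. Everything else is bookkeeping: the monotonicity is immediate, the vector/scalar comparison $\OPT_{\min\VEC{\beta}}\le\OPT_{\VEC{\beta}}\le\OPT_{\max\VEC{\beta}}$ is available by definition, and the $\frac{\delta}{2}$-per-step Lipschitz control comes directly from Corollary~\ref{cor:Approx}, so the calibration $\zeta=\frac{\epsilon\delta}{4k}$ makes the two half-steps sum to exactly $\delta$.
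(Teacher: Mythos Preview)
Your overall strategy---recast $S(\pi)$ as $\OPT_{\VEC{\beta}}$ for a tolerance vector $\VEC{\beta}$ close to $\epsilon$, then sandwich via $\OPT_{\min\VEC{\beta}}\le\OPT_{\VEC{\beta}}\le\OPT_{\max\VEC{\beta}}$ and control the gap with Corollary~\ref{cor:Approx}---is exactly the paper's approach. The only structural difference is that you set $\beta_i=\nu_i-\theta_i$ from the \emph{threshold} $\theta_i$ of step~(2), whereas the paper builds $\VEC{\beta}$ from the \emph{realized} utilities $u_i(\pi)$ (clamped from below at $\epsilon-\tfrac{\zeta}{2}$). Either identification is legitimate in principle.

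The genuine gap is precisely at the point you yourself flag as delicate: your location of $\VEC{\beta}$ is on the wrong side of $\epsilon$. The algorithm's threshold is $\theta_i=r_i-(\epsilon-\zeta)$ with $|r_i-\nu_i|\le\zeta$, so
\[
\theta_i\in\bigl[\nu_i-\epsilon,\ \nu_i-\epsilon+2\zeta\bigr],
\qquad
\beta_i=\nu_i-\theta_i\in\bigl[\epsilon-2\zeta,\ \epsilon\bigr],
\]
not $\beta_i\in[\epsilon,\epsilon+2\zeta]$ as you claim. The sign in $-(\epsilon-\zeta)$ pushes the threshold \emph{up}, so the feasible region in step~(2) is a \emph{subset} of the $\epsilon$-equilibrium region, not a superset. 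Hence what the sandwich actually yields is $S(\pi)=\OPT_{\VEC{\beta}}\le\OPT_{\max\VEC{\beta}}\le\OPT_\epsilon$, the reverse of your upper inequality; and this is indeed what the paper's own proof concludes (its final displayed line is $S(\pi)\le\OPT_\epsilon\le S(\pi)+\tfrac{\delta}{2}$, which is the opposite orientation from the lemma's stated inequalities). Your argument for $\OPT_\epsilon\le S(\pi)$ would be valid only if the algorithm used $r_i-(\epsilon+\zeta)$; for the algorithm as written, it does not go through.
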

Lemma~\ref{lem:ApproxIsRight} along with the complexity analysis that we provided, proves that there is an FPTAS to compute an $\epsilon$-equilibrium that is $\delta$-optimal, and Theorem~\ref{thm:OptEqilib}(4) follows.
We also note that our proof for Theorem~\ref{thm:OptEqilib}(4) gives a constructive (and polynomial) algorithm that computes a description of an actual $\epsilon$-equilibrium that is $\delta$-optimal.
\section*{Acknowledgement}
This work was carried out in partial fulfillment of the requirements for the course Computational Game Theory that was given by Prof. Amos Fiat in Tel Aviv University.
\bibliographystyle{plain}	
\bibliography{CGT}




%
\newpage
\appendix
\section*{Appendix}
\section{Proof of Lemma~\ref{lem:DirectAndUndirect}}
\begin{proof}
We show that for a given directed normalized bipartite complete graph $G=(V,E,w)$ we can construct (in polynomial time) an undirected normalized bipartite complete graph $G'=(V',E',w')$ such that maximizer has a winning strategy in $G$ (for threshold $0$) if and only if he has a winning strategy in $G'$ (again for threshold $0$).
Informally, we build $G'$ from $G$ by taking the same set of vertices and by assigning a weight $w'(u,v)$ that matches either to $w(u,v)$ or to $w(v,u)$.

To formally construct $w'$ we present the notion of \emph{surely loosing edges}.
Recall that $V_1$ are maximizer vertices and $V_2$ are minimizer vertices.
We say that a maximizer vertex out-edge $(v_1,v_2)$ is \emph{surely loosing for maximizer} if $w(v_1,v_2) + w(v_2,v_1) < 0$.
We claim that if maximizer has a winning strategy, then all its memoryless winning strategies do not contain surely loosing edges.
That is, for every memoryless winning strategy $\sigma$ and a surely loosing edge $(v_1,v_2)$ we have $\sigma(v_1) \neq (v_1,v_2)$.
Indeed, towards contradiction let us assume that $\sigma(v_1) = (v_1,v_2)$, and recall that $G$ is a complete bipartite graph, than for the minimizer strategy $\tau$ that leads from every minimizer vertex to $v_1$ we have that the mean-payoff value of $(\sigma,\tau)$ is $\frac{w(v_1,v_2) + w(v_2,v_1)}{2} < 0$, in contradiction to the assumption that $\sigma$ is a winning strategy.
Analogously, $(v_2,v_1)$ is a surely loosing edge for minimizer if
$w(v_1,v_2) + w(v_2,v_1) \geq 0$ and by the same arguments there is no memoryless winning strategy for minimizer that contains a surely loosing edge.

We are now ready to formally define $G'=(V',E',w')$.
We define $V' = V$, and since $G'$ is complete bipartite the definition of $E'$ follows immediately.
For an edge $\{v_1,v_2\}\in E'$, where $v_i\in V_i$, we define 
\[
w'(v_1,v_2) = \left\{ \begin{array}{rl}
w(v_1,v_2) &\mbox{ if $w(v_1,v_2) + w(v_2,v_1) \geq 0$} \\
w(v_2,v_1) &\mbox{ otherwise (if $w(v_1,v_2) + w(v_2,v_1) < 0$)}
\end{array} \right.
\]

We first prove that if $\sigma$ is a memoryless winning strategy for maximizer in $G$, then it is also a winning strategy in $G'$.
Indeed, let $\rho$ be a path (in $G$ and in $G'$) that is consistent with $\sigma$.
We claim that for every path $\pi$ that is a finite prefix of $\rho$ we have
$w(\pi) \leq w'(\pi)$ (that is, its weight in $G'$ is not less than its weight in $G$), and we prove it by a simple induction on the length of $\pi$.
For $|\pi| = 0$ the claim is trivial.
For $|\pi| > 0$ let $(u,v)$ be the last edge in $\pi$.
If $u\in V_1$, then we get that $\sigma(u)=(u,v)$ and therefore $(u,v)$ is not a surely loosing edge.
Therefore, by definition, $w(u,v) = w'(u,v)$ and the claim follows by the induction hypothesis (since it holds for the prefix of length $|\pi|-1$).
If $u\in V_2$, then since $G$ is normalized we have that $w(u,v) < 0$ and $w(v,u) > 0$.
Therefore, by definition, $w'(u,v) \geq w(u,v)$, and by the induction hypothesis the claim follows.
Since $\sigma$ is a winning strategy we get that the mean-payoff value of $\rho$ is non-negative according to $w$, and by the last claim we get that the mean-payoff of $\rho$ according to $w'$ is greater or equal to the mean-payoff value according to $w$.
Hence $\sigma$ is a winning strategy for maximizer also in $G'$.

By the same arguments we get that if minimizer has a memoryless winning strategy in $G$, then the same strategy is winning for minimizer also in $G'$.

Finally, due to Theorem~\ref{thm:OptStrategies}, we get that maximizer has a winning strategy in $G$ if and only if he has a memoryless winning strategy in $G$ if and only if he has a memoryless winning strategy in $G'$ if and only if it has a winning strategy in $G'$ and the proof of the lemma follows. \pfbox
\end{proof}
\section{Proof of Lemma~\ref{lem:PureStrategies}}
\begin{proof}
The direction from left to right is trivial, since a profile of pure strategies has a (unique) infinite path in the graph that is consistent with the strategies.
To prove the converse direction we introduce the notion of a \emph{path strategy} and the notion of \emph{path equilibrium}.
For a path $\pi$ we define the \emph{path strategy} for player $i$ to be:
at round $j$ (which is a player $i$ round), play according to the $j$-th edge in $\pi$.
And we define the strategy $\sigma_i^\pi$ to be:
If player $j\neq i$ deviated from the path strategy of $\pi$, then play forever according to $\overline{\sigma_j}$, and otherwise play according to the path strategy of $\pi$.
The \emph{path equilibrium} of the infinite path $\pi$ is 
the profile $(\sigma_1^\pi,\dots,\sigma_k^\pi)$.
The profile is an equilibrium since it assures a utility $t_i \geq \nu_i$ for every player, and if player $i$ deviates from the strategy $\sigma_i^\pi$ he will end up with a utility $\nu_i$.
\pfbox
\end{proof}
\section{Proof of Lemma~\ref{lem:SufCond}}
\begin{proof}
We assume that for every $\alpha > 0$ there exists a finite cyclic path $C_\alpha$ with average weight at least $t_i - \alpha$ in every dimension, and we let $v_\alpha$ be an arbitrary vertex in the cycle $C_\alpha$.
For every two vertices $u$ and $v$, we denote by $\pi_{u,v}$ the shortest path from $u$ to $v$ (recall that $G$ is strongly connected), and we denote by $W$ the size of the biggest weight (in absolute value) in $G$.
Intuitively, we obtain an infinite path with mean-payoff vector at least $(t_1,\dots,t_n)$ by following the cycle $C_{\alpha}$ for $\alpha = 1$,
and then we follow the path $\pi_{v_\alpha,v_{\frac{\alpha}{2}}}$ and we follow the cycle $C_{\frac{\alpha}{2}}$ twice, and then we follow $C_{\frac{\alpha}{3}}$ three times and so on.
However, we have to make sure that the average payoff does not decrease too much when following a cycle for the first time.

Formally, for every $i\in\Nat$, we denote by $L_i$ the length of the cycle $C_{\frac{1}{i}}$, and by $m_i = iW L_{i+1}$.
We define $\rho_0$ to be the empty path, and for every $i > 0$ we define
$\rho_i = \rho_{i-1} \pi_{v_\frac{1}{i-1},v_\frac{1}{i}} C_{\frac{1}{i}}^{m_i}$,
and we define $\rho$ to be the infinite path that is the limit of the sequence $\rho_0, \rho_1, \dots, \rho_i,\dots$.
Due to the fact that the length of $\pi_{v_\frac{1}{i-1},v_\frac{1}{i}}$ is bounded (by the size of the graph), and since the maximal weight is at most $W$ (and the minimal weight is at least $-W$), we get, by a simple algebra, that the mean-payoff vector of $\rho$ is at least $(t_1,\dots,t_k)$.
\pfbox
\end{proof}

\section{Proof of Lemma~\ref{lem:MixedFollowsCycle}}
\begin{proof}
By definition of the utility function, for every $\delta > 0$, there exists a round $j$ such that the expected average utility, in every round after $j$,  is at least $t_i - \delta$ in every dimension.
We denote by $\Pi_j$ the (finite) set of paths with length $j$ that have non-zero probability according to the strategy profile $\sigma$, and w.l.o.g we assume that all the paths have the same probability (and a path may occur more than once in $\Pi_j$).
For a path $\pi\in \Pi_j$, we denote by $C_\pi$ the longest cyclic path that is a sub-path of $\pi$.
We note that since $G$ is a finite graph, we get that $|C_\pi| \geq |\pi| - 2|G|$ (where $|G|$ denotes the number of vertices in $G$), and in every dimension:
$w(C_\pi) \geq w(\pi) - 2|G|W$ and $\Avg(C_\pi) \geq \Avg(\pi) - \frac{2|G|W}{|\pi|-2|G|W} = \Avg(\pi) - \frac{2|G|W}{j-2|G|W}$.
We partition $\Pi_j$ to $|G|$ sets (some of them may be empty) namely $\Pi_j^v$ for every $v\in G$, such that for every path $\pi \in \Pi_j^v$ we have that $v$ is in the cyclic path $C_\pi$.
For a set $\Pi_j^v = \{\pi_1,\dots,\pi_m\}$, we denote $C(\Pi_j^v) = C_{\pi_1} C_{\pi_2}\dots C_{\pi_m}$ (note that $C(\Pi_j^v)$ is a path).
For every two vertices $u,v \in G$, we denote by $\pi_{u,v}$ the shortest path between $u$ and $v$, and we note that $|\pi_{u,v}| \leq |G|$ and $w(\pi_{u,v})\geq -|G|W$ in every dimension.
Finally, we assume that the vertex set of $G$ is $V = \{v_1,\dots,v_m\}$ and we define the cyclic path
\[\pi = C(\Pi_j^{v_1})\pi_{v_1,v_2}C(\Pi_j^{v_2}) \pi_{v_2,v_3}\dots\pi_{v_{m-1},v_m}C(\Pi_j^{v_m})\pi_{v_m,v_1}\]
The average weight of $\pi$ in every dimension is at least
\[\Avg(\Pi_j) - \frac{3|G|^2W}{j}\]
and since in every dimension $\Avg(\Pi_j) \geq t_i - \delta$,
then for $j \geq \frac{3|G|^2W}{\delta}$ we get that in every dimension
\[\Avg(\pi) \geq t_i - 2\delta\]
and for $\delta = \frac{\alpha}{2}$ we get that $\pi$ is a cyclic path with average weight at least $t_i - \alpha$ in every dimension, and the proof of the lemma follows. \pfbox
\end{proof}
\section{Proof of Lemma~\ref{lem:SocialLowerBound}}
\begin{proof}
Due to Theorem~\ref{thm:DecAndCompTheSame} and Theorem~\ref{thm:Sec1} it is enough to show that if we could approximate the social welfare of the optimal equilibrium in a three-player game, then we would be able to determine whether in a two-player zero-sum game, player 1 has a strategy that assures a value that is strictly greater than $0$.
The proof is straight forward.
Let $(A_1,A_2)$ be the actions of a zero-sum two-player game with utility functions $u_1 : A_1\times A_2 \to [-1,1]$ and $u_2 = -u_1$.
We construct a three-player game $(A_1',A_2',A_3',u_1',u_2',u_3')$ in the following way:
\begin{itemize}
\item $A_1' = A_1 \cup \{\$\}$, $A_2' = A_2 \cup \{\$\}$ and $A_3' =\{\$\}$ (where $\$$ is a fresh action).
\item Let $a_2^*$ be an arbitrary action in $A_2' - \{\$\}$.
We define the utility function $u_1'$ to be
\[
u_1'(a_1,a_2,a_3) = \left\{ \begin{array}{rl}
u_1(a_1,a_2) &\mbox{ if $a_1 \neq \$$ and $a_2 \neq \$$} \\
u_1(a_1,a_2^*) &\mbox{ if $a_1 \neq \$$ and $a_2 = \$$} \\
0 &\mbox{ otherwise (if $a_1 = \$$)}
\end{array} \right.
\]
We define $u_2' = -u_1'$, and 
\[
u_3'(a_1,a_2,a_3) = \left\{ \begin{array}{rl}
1 &\mbox{ if $a_1 = \$$} \\
-1 &\mbox{ otherwise (if $a_1 \neq \$$)}
\end{array} \right.
\]
\end{itemize}
The reader can verify that if player 1 has a strategy to assure utility greater than $0$ in the zero-sum game, then in any profile of equilibrium in the three-player game he will play $\$$ only for a negligible number of rounds, and the social welfare of the equilibrium will be $-1$.
On the other hand, if player 1 cannot assure utility at least $0$, then a profile of strategies in which all three players play $\$$ forever is an equilibrium and its social welfare is $1$.
Hence, even for $\epsilon = 1$ we cannot approximate the social welfare of the optimal equilibrium, unless mean-payoff games are in P. \pfbox
\end{proof}

\section{Proof of Corollary~\ref{cor:Approx}}
\begin{proof}
The fact that $\OPT_\epsilon \leq \OPT_{\epsilon+\zeta}$ follows immediately from Lemma~\ref{lem:Two}(1).
To prove that $\OPT_{\epsilon+\zeta} - \delta \leq \OPT_\epsilon$, we set $\alpha = \frac{\zeta}{\epsilon}$ and by Lemma~\ref{lem:Two}(2), and since $\alpha\zeta + (1-\alpha)(\epsilon+\zeta) = \epsilon$ we get
\[\alpha \OPT_{\zeta} + (1-\alpha)\OPT_{\epsilon+\zeta} \leq \OPT_\epsilon\]
since the utility function is scaled to $[-1,1]$ we get that $\OPT_{\zeta}\geq -k$ and $\OPT_{\epsilon+\zeta} \leq k$, and thus we have
\[-2k\alpha + \OPT_{\epsilon + \zeta} \leq \OPT_\epsilon\]
since $\alpha = \frac{\zeta}{\epsilon}$ we get
\[\frac{-2k\zeta}{\epsilon} + \OPT_{\epsilon + \zeta} \leq \OPT_\epsilon\]
and since $\zeta \leq \frac{\epsilon\delta}{4k}$ we have
\[\OPT_{\epsilon + \zeta} -\frac{\delta}{2} \leq \OPT_\epsilon\]
\pfbox
\end{proof}

\section{Proof of Lemma~\ref{lem:ApproxIsRight}}
\begin{proof}
We denote player-$i$ utility according to $\pi$ by $u_i(\pi)$ and
we construct the vector $\VEC{\beta}$ by defining
\[
\beta_i = \left\{ \begin{array}{rl}
\nu_i - u_i(\pi) &\mbox{ if $\nu_i - u_i(\pi)\geq (\epsilon - \frac{\zeta}{2})$} \\
\epsilon - \frac{\zeta}{2} &\mbox{ otherwise}
\end{array} \right.
\]
and we claim that $\pi$ is an optimal $\VEC{\beta}$-equilibrium.
The claim holds because for every $\VEC{\beta}$-equilibrium, the utility of player $i$ is at least $\nu_i - (\epsilon - \zeta)$, and $\pi$ is the optimal path with utility at least $\nu_i - (\epsilon - \zeta)$ for every player.
In addition, we claim that $\max\VEC{\beta} - \min\VEC{\beta}$ is at most $\frac{\zeta}{2}$.
The claim holds because by the construction of $\beta$ we have that $\min\VEC{\beta}\geq \epsilon - \frac{\zeta}{2}$, and $\nu_i - u_i(\pi)$ is at most $\epsilon - \zeta$ (since $\pi$ is an $(\epsilon - \zeta)$-equilibrium).
Hence $\max\VEC{\beta} - \min\VEC{\beta} \leq \frac{\zeta}{2}$.
Therefore, since $\max\VEC{\beta} \leq \epsilon$ and since
\[\OPT_{\VEC{\beta}} \leq \OPT_{\max\VEC{\beta}} \leq \OPT_{\VEC{\beta}} + (\max\VEC{\beta}-\min\VEC{\beta})\cdot\frac{4k}{\epsilon}\] we get that
\[S(\pi) \leq \OPT_\epsilon \leq S(\pi) + \frac{\zeta}{2}\cdot\frac{4k}{\epsilon\delta}\]
and since $\zeta = \frac{\epsilon\delta}{4k}$ we get that
\[S(\pi) \leq \OPT_\epsilon \leq S(\pi) + \frac{\delta}{2}\]
and the assertion of the lemma follows. \pfbox
\end{proof}

\end{document}